\theoremstyle{plain}
\theoremstyle{plain}
\theoremstyle{plain}
\newtheorem{lem}{\protect\lemmaname}
\theoremstyle{plain}
\newtheorem{thm}{\protect\theoremname}
\theoremstyle{plain}
\theoremstyle{definition}
\newtheorem{defn}{\protect\definitionname}
\theoremstyle{definition}
\theoremstyle{definition}
\providecommand{\claimname}{Claim}
\providecommand{\lemmaname}{Lemma}
\providecommand{\propositionname}{Proposition}
\providecommand{\theoremname}{Theorem}
\providecommand{\corollaryname}{Corollary} 
\providecommand{\definitionname}{Definition}
\providecommand{\assumptionname}{Assumption}
\providecommand{\remarkname}{Remark}
\newcommand{\poly}{{\rm poly}}
\newcommand{\svtil}{\tilde{\mathbf{s}}}
\newcommand{\stil}{\tilde{s}}
\newcommand{\Sctil}{\widetilde{\mathcal{S}}}
\newcommand{\yvtil}{\tilde{\mathbf{y}}}
\newcommand{\Kchat}{\widehat{\mathcal{K}}}
\newcommand{\Bernoulli}{\mathrm{Bernoulli}}
\newcommand{\pe}{P_{\mathrm{e}}}
\newcommand{\uv}{\mathbf{u}}
\newcommand{\xv}{\mathbf{x}}
\newcommand{\Xv}{\mathbf{X}}
\newcommand{\yv}{\mathbf{y}}
\newcommand{\Yv}{\mathbf{Y}}
\newcommand{\Ac}{\mathcal{A}}
\newcommand{\Bc}{\mathcal{B}}
\newcommand{\Cc}{\mathcal{C}}
\newcommand{\Ic}{\mathcal{I}}
\newcommand{\Kc}{\mathcal{K}}
\newcommand{\Sc}{\mathcal{S}}
\newcommand{\EE}{\mathbb{E}}
\newcommand{\PP}{\mathbb{P}}
\newcommand{\RR}{\mathbb{R}}
\newcommand{\Lc}{\mathcal{L}}
\newcommand{\cv}{\mathbf{c}}
\newcommand{\sv}{\mathbf{s}}
\providecommand{\tabularnewline}{\\}
\providecommand{\algorithmname}{Algorithm}
\newcommand{\manuallabel}[2]{\def\@currentlabel{#2}\label{#1}}
\begin{document} 

\title{Sublinear-Time Non-Adaptive Group Testing with $O(k \log n)$ Tests via Bit-Mixing Coding}

\author{Steffen Bondorf, Binbin Chen, Jonathan Scarlett, Haifeng Yu, Yuda Zhao}
\maketitle

\begin{abstract}
    The group testing problem consists of determining a small set of defective items from a larger set of items based on tests on groups of items, and is relevant in applications such as medical testing, communication protocols, pattern matching, and many more.  While rigorous group testing algorithms have long been known with runtime at least linear in the number of items, a recent line of works has sought to reduce the runtime to $\poly(k \log n)$, where $n$ is the number of items and $k$ is the number of defectives. 
    In this paper, we present such an algorithm for non-adaptive {probabilistic} group testing termed {\em bit mixing coding} (BMC), {which builds on techniques that encode item indices in the test matrix, while incorporating novel ideas based on erasure-correction coding}. We show that BMC achieves asymptotically vanishing error probability with $O(k \log n)$ tests and $O(k^2 \cdot \log k \cdot \log n)$ runtime, in the limit as $n \to \infty$ (with $k$ having an arbitrary dependence on $n$).  This closes a recently-proposed open problem of simultaneously achieving $\poly(k \log n)$ decoding time using $O(k \log n)$ tests without any assumptions on $k$.  In addition, we show that the same scaling laws can be attained in a commonly-considered noisy setting, in which each test outcome is flipped with constant probability.
\end{abstract}
\begin{IEEEkeywords}
    Group testing, sublinear-time decoding, sparsity, superimposed codes.
\end{IEEEkeywords}

\long\def\symbolfootnote[#1]#2{\begingroup\def\thefootnote{\fnsymbol{footnote}}\footnote[#1]{#2}\endgroup}

\symbolfootnote[0]{ 

The authors of this paper are alphabetically ordered.  This work was done while S.~Bondorf and Y.~Zhao were at the National University of Singapore.

S.~Bondorf is with NTNU Trondheim, Norway. (e-mail: steffen.bondorf@ntnu.no).

B. Chen is with the Advanced Digital Sciences Center, Singapore (e-mail: binbin.chen@adsc-create.edu.sg).

J.~Scarlett is with the Department of Computer Science \& Department of Mathematics, National University of Singapore (e-mail: scarlett@comp.nus.edu.sg). 

H.~Yu is with the Department of Computer Science, National University of Singapore (e-mail: haifeng@comp.nus.edu.sg).

Y.~Zhao is with Advance.AI (e-mail: yudazhao@gmail.com).

A preliminary conference version of this work was presented at the International Conference on Information Processing in Sensor Networks (IPSN), Montreal, 2019 \cite{Bon19}.

}

%
%
\section{Introduction} \label{sec:intro}

The group testing problem consists of determining a small subset of defective items within a larger set of items, based on a tests performed on groups of items, and corresponding outcomes that indicate whether the group contains at least one defective item. This problem has a history in medical testing \cite{Dor43}, and has regained significant attention following new applications in areas such as communication protocols \cite{Ant11}, pattern matching \cite{Cli10}, and database systems \cite{Cor05}, and connections with compressive sensing \cite{Gil08,Gil07}.  The design and analysis of group testing algorithms remains an active ongoing area of research; see \cite{Du93,Ald19} for comprehensive surveys.  {Some of the key defining features of the group testing problem are outlined as follows:
\begin{itemize}
    \item {\bf Combinatorial vs.~probabilistic.}  In {\em combinatorial group testing} \cite{Du93}, one seeks to construct a testing procedure that guarantees the recovery of {\em all} defective sets up to a certain size.  In contrast, in {\em probabilistic group testing} \cite{Ald19}, the test design may be randomized, and the algorithm is allowed some non-zero probability of error.  Combinatorial group testing is also known as the {\em for-all model} or the {\em zero-error recovery criterion}, and probabilistic group testing is also known as the {\em for-each model} or the {\em small-error recovery criterion}. 
    \item {\bf Adaptive vs.~non-adaptive.} In the {\em adaptive} setting, each test may be designed based on all previous outcomes, whereas in the {\em non-adaptive setting}, all tests must be chosen prior to observing any outcomes.  The non-adaptive setting is often preferable in practice, as it permits the tests to be performed in parallel. 
    \item {\bf Noiseless vs.~noisy.} In the {\em noiseless} setting, the test outcomes are perfectly reliable, whereas in {\em noisy settings}, some tests may be flipped according to some probabilistic or adversarial noise model.
\end{itemize}
Our focus is on non-adaptive probabilistic group testing; we formally introduce the noiseless model below, and turn to the noisy setting in Section \ref{sec:noisy}.
}


\subsection{Problem Setup} \label{sec:setup}

The group testing problem consists of $n$ items labeled $\{1,\dotsc,n\}$, a subset $\Kc$ of which are {\em defective}.  We seek to identify $\Kc$ via a series of suitably-chosen tests.  Except where stated otherwise, we consider the noiseless setting, in which each test takes the form
\begin{equation}
    Y = \bigvee_{j \in \Kc} X_j, \label{eq:gt_noiseless_model}
\end{equation}
where the test vector $X = (X_1,\dotsc,X_n) \in \{0,1\}^n$ indicates which items are included in the test, and $Y \in \{0,1\}$ is the resulting test outcome.  That is, the output indicates whether at least one defective item is included in the test.   The goal is to design a sequence of tests $X^{(1)},\dotsc,X^{(t)}$, with $t$ ideally as small as possible, such that the outcomes can be used to reliably recover the defective set $\Kc$.  

We focus on {\em non-adaptive} test designs, in which all tests must be chosen prior to observing any outcomes.  Accordingly, the tests $X^{(1)},\dotsc,X^{(t)}$ are represented by a {\em test matrix} $\Xv \in \{0,1\}^{t \times n}$ whose $i$-th column is $X^{(i)} \in \{0,1\}^t$.  The corresponding test outcomes are denoted by $\Yv = (Y^{(1)},\dotsc,Y^{(t)})$, with $Y^{(i)} \in \{0,1\}$ generated from $X^{(i)}$ according to the model \eqref{eq:gt_noiseless_model}.
  
Given the tests and their outcomes, a \emph{decoder} forms an estimate $\Kchat$ of $\Kc$.  We consider the exact recovery criterion, in which the error probability is given by 
\begin{equation}
    \pe := \PP[\Kchat \ne \Kc]. \label{eq:pe}
\end{equation}
We assume that $|\Kc| \le k$ for some $k$ that is known to the group testing algorithm.  That is, the algorithm knows an upper bound on $|\Kc|$ is known but not necessarily the exact value.  {This is a standard assumption in the literature, and an assumption of this kind is necessary in the non-adaptive setting -- without an upper bound on $|\Kc|$, one would need to account for scenarios such as $|\Kc| \ge \frac{n}{2}$ that require $n$ tests \cite{Ald18}.}

Our analysis will hold for an arbitrary {\em fixed} defective set $\Kc$ with cardinality at most $k$, meaning that the probability in \eqref{eq:pe} is only with respect to our randomized test design $\Xv$.  However, we can alternatively view our results as certifying the existence of a fixed matrix $\Xv$ yielding small $\pe$ with respect a randomly generated $\Kc$ whose distribution is {\em independent of $\Xv$} and satisfies $|\Kc| \le k$ almost surely.

{Throughout the paper, we use the standard asymptotic notation $O(\cdot)$, $o(\cdot)$, $\Theta(\cdot)$, $\Omega(\cdot)$ and $\omega(\cdot)$.}

\subsection{Summary of Results} \label{sec:contr}

{The vast majority of the group testing literature has sought to develop test designs with as few tests as possible, and decoding algorithms whose runtime is linear or polynomial in the number of items.  Recently, however, a line of works has developed test designs and decoding algorithms that permit more efficient $\poly(k \log n)$ decoding time when there are $n$ items and $k$ defectives, thereby considerably reducing the dependence on $n$.  This was first done in the combinatorial  setting \cite{Che09,Ind10,Ngo11} and more recently in the probabilistic setting \cite{Cai13,Lee15a,Ina19}; see Section \ref{sec:related} for details.  
}  

{In this paper, we introduce a non-adaptive probabilistic group testing procedure termed {\em bit mixing coding} (BMC) that attains asymptotically vanishing error probability as $n \to \infty$ with $O(k \log n)$ tests and $O(k^2 \cdot \log k \cdot \log n)$ decoding time.  The $O(k \log n)$ number of tests is known to be order-optimal whenever $k \le O(n^{1-\epsilon})$ for some $\epsilon > 0$. BMC is the first algorithm to achieve such optimal number of tests together with $\poly(k \log n)$ decoding time, resolving an open problem recently posed in \cite{Ina19}. As we will see in Section \ref{sec:related}, the best known previous approach with $\poly(k \log n)$ decoding time needed to use $O(k \cdot \log k \cdot \log n)$ tests \cite{Cai13,Lee15a}.
In the terminology of \cite{Bal13,Ald14a,Ald19}, order-optimality in the number of tests amounts to attaining a {\em positive rate}: The number of bits learned per test is $\Theta(1)$, whereas existing algorithms that use $O(k \cdot \log k \cdot \log n)$ tests \cite{Cai13,Lee15a} only learn $O\big( \frac{1}{\log k} \big)$ bits per test.
%
%
Finally, we note that the $O(k^2 \cdot \log k \cdot \log n)$ decoding time of BMC falls short of the $O(k \cdot \log k \cdot \log n)$ decoding time achieved using $O(k \cdot \log k \cdot \log n)$ tests \cite{Cai13,Lee15a}, which leaves open the possibility of reducing our runtime further while maintaining order-optimality in the number of tests.

%




BMC has a few additional salient features. While BMC uses a randomized test design, along the way we provide sufficient conditions for success that hold with high probability and can be verified in time $\poly(k \log n)$.
BMC also can naturally incorporate mechanisms for combating noise in the test outcomes: In Section \ref{sec:noisy}, we describe straightforward modifications to the test design and decoding algorithm to permit randomly flipped test outcomes while preserving the guarantees on the number of tests and decoding time.}
 
\section{Related Work} \label{sec:related}

In this section, we provide a detailed overview of the most related existing works, first focusing on the theoretical results and then discussing the corresponding algorithmic ideas.

\subsection{Overview of Existing Group Testing Results} \label{sec:ex_gt}

{The existing literature on non-adaptive group testing most related to this work is summarized in Table \ref{tbl:summary}.  Along with the distinction between the combinatorial and probabilistic settings, we highlight the following features of the test designs and recovery algorithms:
\begin{itemize}
    \item {\bf Explicit vs.~randomized.} Many of the tightest bounds in the literature are based on {\em randomized} test designs.  In contrast, an efficient deterministic procedure for constructing a test design is said to be {\em explicit}.  There are various notions of how efficient the procedure should be to warrant this terminology \cite{Por11}; to facilitate our discussion, we only consider the most lenient notion in the literature, requiring the test matrix $\Xv \in \{0,1\}^{t \times n}$ can be deterministically constructed in time polynomial in $t$ and $n$.
    \item {\bf Decoding efficiency.} The majority of the group testing literature considers algorithms with $\Omega(n)$ runtime, e.g., as a result of traversing the entire matrix $\Xv \in \{0,1\}^{t \times n}$.  Our focus, however, is on decoding algorithms with a significantly lower runtime of the form $\poly(k \log n)$ decoding time, ideally with a low polynomial power.  Such algorithms attain {\em sublinear-time decoding} (i.e., decoding time scaling as $o(n)$) when $k$ grows sufficiently slowly with respect to $n$.
    \item {\bf Recovery criteria.} Except where stated otherwise, all results that we overview correspond to the exact recovery criterion, requiring that $\Kchat = \Kc$ (see \eqref{eq:pe}).  However, we also briefly mention two other recovery criteria appearing in Table \ref{tbl:summary}: (i) The {\em list decoding} criterion \cite{Deb05,Dya83,Ras90,Che09,Ind10,Ngo11} only requires identifying a superset of the defective set, typically constrained to be of size $O(k)$; (ii) The {\em approximate recovery} criterion \cite{Lee15a,Sca17} only requires identifying a fraction $1-\epsilon$ of the defectives, for some constant $\epsilon > 0$.
\end{itemize}
We proceed by discussing the results in Table \ref{tbl:summary} in more detail; the most relevant algorithmic ideas used in attaining these results will be discussed in Section \ref{sec:techniques}.

{\bf Combinatorial group testing.} The combinatorial setting poses a strictly harder problem than the probabilistic setting, in the sense of requiring $t = \Omega\big( \min\big\{ k^2 \frac{\log n}{\log k}, n \big\} \big)$ tests \cite{Dya82} as opposed to $O(k \log n)$ \cite{Ald14a}.  The best-known $O(k^2 \log n)$ upper bound on the number of tests was originally attained with $\Omega(n)$ decoding time, first using random coding methods \cite{Dya82} and then using explicit designs \cite{Por11}.  See \cite{Kau64,Dya83,Dya00,Kim04} and the references therein for further related works.

More recently, algorithms were developed that attain ${\rm poly}(t)$ decoding time \cite{Che09,Ind10,Ngo11}, with Cheragchi \cite{Che09} focusing on list decoding, and Indyk {\em et al.}~\cite{Ind10} and Ngo {\em et al.}~\cite{Ngo11} considering exact recovery.  In particular, the latter works showed that the decoding time can be reduced to ${\rm poly}(t)$ while maintaining the $t = O(k^2 \log n)$ scaling of \cite{Por11}.  To achieve this, \cite{Ind10} used a randomized design, and \cite{Ngo11} presented an explicit construction.

A more recent work provided an explicit construction attaining $t = O(k^2 \log^2 n)$ with $O(k^3 \log^2 n)$ decoding time \cite{Che19}.  The main feature highlighted in \cite{Che19} is the simplicity of the construction, but a drawback is an additional $\log n$ factor in the number of tests.

\begin{table}
\begin{center}
    \begin{tabular}{|>{\centering}m{2.8cm}|>{\centering}m{2.35cm}|>{\centering}m{2.9cm}|>{\centering}m{3.0cm}|>{\centering}m{3.5cm}|}
    \hline 
    \textbf{References} & \textbf{Guarantee} & \textbf{Number of tests $t$} & \textbf{Runtime} & \textbf{Construction}\tabularnewline
    \hline 
    \hline
    {\em Lower Bound} \cite{Dya82} & Combinatorial & $\Omega\big(\min\big\{k^2 \frac{\log n}{\log k},n\big\}\big)$ & - & -\tabularnewline
    \hline
    D'yachkov-Rykov \cite{Dya83} & Combinatorial & $O(k^{2}\log n)$ & $\Omega(n)$ & Randomized\tabularnewline
    \hline 
    Kautz-Singleton \cite{Kau64} & Combinatorial & $O\big(k^{2}\frac{\log^2 n}{\log^2 k}\big)$ & $\Omega(n)$ & Explicit\tabularnewline
    \hline 
    Porat-Rothschild \cite{Por11} & Combinatorial & $O(k^{2}\log n)$ & $\Omega(n)$ & Explicit\tabularnewline
    \hline 
    Cheragchi \cite{Che09}  & Combinatorial
    
    {\bf (list decoding only)} & $O(k\cdot2^{\log^{3}\log n})$ & ${\rm poly(t)}$ & Explicit\tabularnewline
    \hline 
    Indyk \emph{et al.}~\cite{Ind10} & Combinatorial & $O(k^{2}\log n)$ & ${\rm poly}(k)t\log^{2}(t)+O(t^{2})$ & Randomized

    {\footnotesize (Explicit if $k = O\big(\frac{\log n}{\log \log n}\big)$)}\tabularnewline
    \hline 
    Ngo \emph{et al.}~\cite{Ngo11} & Combinatorial & $O(k^{2}\log n)$ & ${\rm poly(t)}$ & Explicit\tabularnewline
    \hline
    Cheraghchi-Ribeiro~\cite{Che19} & Combinatorial & $O(k^{2}\log^2 n)$ & $O(k^3 \log^2 n)$ & Explicit\tabularnewline
    \hline
    \hline
    {\em Lower Bound} \cite{Mal78} & Probabilistic & $\Omega\big( k \log \frac{n}{k} \big)$ & - & -\tabularnewline
    \hline 
    Various \cite{Cha11,Che11,Mal13,Ald14a,Sca15b,Sca17b,Coj19,Coj19a} & Probabilistic & $O(k \log n)$ & $\Omega(n)$ & Randomized\tabularnewline
    \hline
    Mazumdar~\cite{Maz16} & Probabilistic & $O\big(k \frac{\log^2 n}{\log k}\big)$ & $\Omega(n)$ & Explicit\tabularnewline
    \hline
    Inan \emph{et al.}~\cite{Ina19} & Probabilistic & $O(k \log n)$ & $\Omega(n)$ & Explicit\tabularnewline
    \hline
    GROTESQUE~\cite{Cai13} & Probabilistic & $O(k\cdot\log k\cdot\log n)$ & $O(k\cdot\log k\cdot\log n)$ & Randomized\tabularnewline
    \hline 
    SAFFRON \cite{Lee15a}  & Probabilistic & $O(k\cdot\log k\cdot\log n)$ & $O(k\cdot\log k\cdot\log n)$ & Randomized\tabularnewline
    \hline
    SAFFRON \cite{Lee15a}  & Probabilistic 

    {\bf (approximate recovery only)} & $O(k \log n)$ & $O(k \log n)$ & Randomized\tabularnewline 
    \hline 
    Inan \emph{et al.}~\cite{Ina19} & Probabilistic & $O\big(k\cdot\log n\cdot\log\frac{\log n}{\log k}\big)$ & $\ensuremath{O\big(k^{3}\cdot\log n\cdot\log\frac{\log n}{\log k}\big)}$ & Explicit\tabularnewline
    \hline 
    \textbf{This paper} & Probabilistic & \textbf{$O(k\log n)$} & \textbf{$O(k^2 \cdot\log k\cdot\log n)$} & Randomized\footnotemark\tabularnewline
    \hline
    \end{tabular}
    \par\end{center}

    \caption{{Overview of existing non-adaptive group testing results, with $n$ items, $k$ defectives, and $t$ tests.  Two of the works listed above attain a reduced number of tests and/or runtime by considering a less stringent recovery criterion than exact recovery: (i) In \cite{Che09}, a list $\Lc$ of size $O(k)$ is returned, and it is only required that $\Kc \subseteq \Lc$; (ii) In \cite{Lee15a}, a set of size $k$ is returned, but it is only required to contain a fraction $1-\epsilon$ of the defectives for some $\epsilon > 0$, and the scaling laws shown do not apply in the limit as $\epsilon \to 0$.}\label{tbl:summary}}
\end{table}



{\bf Probabilistic group testing.} In the probabilistic setting, the $\Omega\big( k \log\frac{n}{k}\big)$ lower bound \cite{Mal78} on the number $t$ of tests indicates that the scaling of $t = O(k \log n)$ is optimal whenever $k \le O(n^{1-\epsilon})$ for some constant $\epsilon > 0$.   Under a randomized test design and with $\Omega(n)$ decoding time, numerous results attaining asymptotically vanishing error probability with $t = O(k \log n)$ have been obtained \cite{Cha11,Che11,Mal13,Ald14a,Sca15b,Sca17b,Coj19,Coj19a}.  In addition, Inan {\em et al.}~\cite{Ina19} attain $t = O(k \log n)$ with $\Omega(n)$ decoding time using an explicit design, improving on an earlier $t = O\big( k \frac{\log^2 n}{\log k} \big)$ bound due to Mazumdar \cite{Maz16}.

\footnotetext{{Despite being randomized, we also provide sufficient conditions that hold with high probability, that ensure success, and that can be verified in time $\poly(k \log n)$; see the discussion following Lemma \ref{lem:lcs}.}}

The most relevant existing works to this paper are those attaining $\poly(k \log n)$ decoding time in the probabilistic setting, particularly GROTESQUE \cite{Cai13}, SAFFRON \cite{Lee15}, and the study of the Kautz-Singleton construction by Inan {\em et al.}~\cite{Ina19}.  GROTESQUE and SAFFRON attain $O(k \cdot \log k \cdot \log n)$ for both the number of tests and runtime using randomized designs, whereas \cite{Ina19} attains $t = O\big(k \cdot \log n \cdot \log \frac{\log n}{\log k}\big)$ with $O\big(k^3 \cdot \log n \cdot \log \frac{\log n}{\log k}\big)$ decoding time using an explicit design.  SAFFRON additionally improves the $O(k \cdot \log k \cdot \log n)$ scaling to $O( C(\epsilon) k \log n)$ (for some $C(\epsilon) > 0$) under an approximate recovery criterion that only requires $(1-\epsilon) k$ defectives to be identified. But an inspection of the proof reveals that $C(\epsilon) \to \infty$ as $\epsilon \to 0$, precluding exact recovery with an order-optimal number of tests.

We note that if $k = \Theta(n^{\alpha})$ for some constant $\alpha \in (0,1)$ and one merely requires $t = O(k \log n)$ and $\poly(k \log n)$ decoding time, then there exist several algorithms achieving this goal as a result of having $O(n t)$ decoding time \cite{Cha11,Che11,Ald14a,Sca17b}, or $O(k^3 \cdot \log n)$ runtime in the case of \cite{Ina19}.  We therefore contend that the regime of primary interest for seeking $\poly(k \log n)$ decoding time is the sparser regime in which $k = o(n^{\alpha})$ for any constant $\alpha > 0$. 


{\bf Comparison with our results.} As outlined above, our main contribution is to bring the number of tests down to the optimal $t = O(k \log n)$ scaling while maintaining efficient decoding time (namely, $O(k^2 \log k \cdot \log n)$), and imposing no restrictions on $k$.

}

\subsection{Overview of Existing Group Testing Techniques} \label{sec:techniques}

{
The above-outlined group testing algorithms with efficient decoding are predominantly either based on {\em code concatenation}, or utilize the idea of {\em encoding item indices} into the test matrix.  While these are two fundamentally different techniques, they in fact share a common high-level structure, depicted in Figure \ref{fig:concat_mask}.  Initially, a matrix is formed with $n$ rows, denoted by $\cv_1,\dotsc,\cv_n$.  The final (transpose of the) test matrix is formed by expanding each entry of each $\cv_i$ to a longer binary sequence:
\begin{itemize}
    \item In the concatenated coding approach, the ``codewords'' $\cv_1,\dotsc,\cv_n$ form a non-binary {\em outer code}, and each codeword symbol is mapped to a block in the final test matrix via an {\em inner code}.  For instance, the Kautz-Singleton construction \cite{Kau64} employs a trivial inner code (along with a Reed-Solomon outer code) that maps to a vector with one in a single entry indexing the corresponding non-binary symbol, and zeros elsewhere.
    \item In the alternative approach that encodes item indices, the entries of $\cv_1,\dotsc,\cv_n$ are binary.  Any zero entry is trivially mapped to a block of zeros, whereas the entries equaling one in $\cv_i$ are mapped to a binary vector describing the item's index, $i \in \{1,\dotsc,n\}$.  A standard binary description would require exactly $\lceil \log_2 n \rceil$ bits, but a longer length may be used to facilitate fast decoding \cite{Lee15a} and/or improve robustness to noise \cite{Cai13}.
\end{itemize}
We proceed by discussing each of these in more detail.  (In Section \ref{sec:cs}, we also discuss the origins of the latter approach in the context of compressive sensing.)

\begin{figure}
  \centering
  \includegraphics[width=0.95\columnwidth]{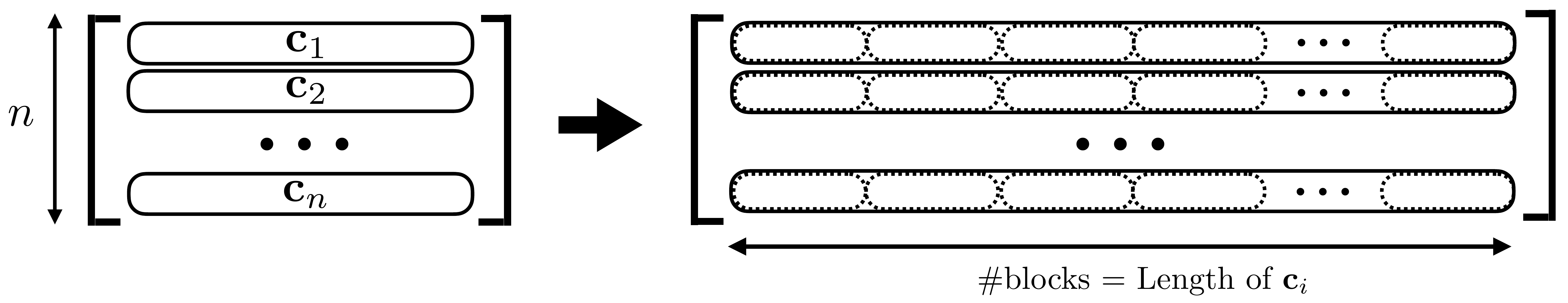}
  \caption{{High-level illustration of the (transpose of the) group testing matrices formed by code concatenation or techniques that directly encode the items' indices; the details of the two techniques differ, and are given in the text.}}
  \label{fig:concat_mask}
\end{figure}

{\bf Code concatenation.} Early uses of concatenated codes in combinatorial group testing incurred $\Omega(n)$ decoding time \cite{Kau64,Por11}.  To achieve $\poly(k \log n)$ time decoding, the main idea employed in \cite{Ind10,Ngo11} is {\em list decoding}: An {inner code} of length $O(k \log n)$ is used that in itself would suffice the attain the above-mentioned list decoding criterion, and an {outer code} of rate $O\big( \frac{1}{k} \big)$ is used to efficiently resolve the remaining uncertainty in the list (e.g., a Reed-Solomon code suffices), leading to $O(k^2 \log n)$ tests in total.
In addition, \cite{Ngo11} proposed a novel recursive construction that allows the decoder to recover a ${\rm poly}(k)$-size superset of the defective set, after which a standard decoding strategy for disjunct matrices can be used to resolve the remaining uncertainty.  

The recent work of Inan {\em et al.}~\cite{Ina19} follows the Kautz-Singleton construction \cite{Kau64} (i.e., a Reed-Solomon outer code and trivial inner code), but varies its parameters (e.g., the code length) to better suit the probabilistic group testing setting, as opposed to the combinatorial setting considered in \cite{Kau64}.

{\bf Techniques that encode item indices.}  Existing group testing techniques that encode the items' indices into the test matrix have focused predominantly on the probabilistic setting, rather than the combinatorial setting.\footnote{An exception is \cite{Che19}, which will briefly be discussed in Section \ref{sec:cs}.}  In particular, we highlight the GROTESQUE \cite{Cai13} and SAFFRON \cite{Lee15a} algorithms, both of which use random binary strings $\cv_1,\dotsc,\cv_n$ in Figure \ref{fig:concat_mask} drawn from a suitably-designed distribution.  This distribution is chosen such that, with high probability, each defective item $i \in \Kc$ is {\em isolated}:  There exists an index $j$ for which the $j$-th entry of $\cv_i$ is $1$, but the $j$-th entry of each string in $\{\cv_{i'}\}_{i' \in \Kc \setminus \{i\}}$ is $0$.\footnote{This discussion is based on the ``singleton-only'' version of SAFFRON.  The general version also makes use of blocks with two defectives, which are useful after one of them has already been identified.  However, both versions yield the same scaling laws in the number of tests and decoding time.}  When this property holds, each test outcome in the corresponding block (see the right of Figure \ref{fig:concat_mask}) is positive if and only if item $i$ is included in the test.

The two algorithms primarily differ in (i) how to locate the blocks corresponding to isolated defectives, and (ii) how to decode the defective items' indices:
\begin{itemize}
    \item In SAFFRON, the two are done simultaneously by using blocks of length $2 \lceil\log_2 n\rceil$ to encode each item's index and its complement.  Then, it is shown that each block corresponds to an isolated defective if and only if the corresponding outcomes contain exactly $\lceil \log_2 n \rceil$ ones.  In addition, when this is the case, the item index can be directly read from the first $\lceil \log_2 n \rceil$ outcomes.
    \item GROTESQUE uses a procedure termed {\em multiplicity testing}, in which items are randomly tested, and the cases ``no defectives'' vs.~``one defective'' vs.~``two or more defectives'' can be distinguished by simply counting the number of $1$'s in the outcomes.  To ensure the reliable recovery of the item's index in the case that the answer is ``one defective'', the index is encoded in each block using an expander code to combat possible noise, though a trivial length-$\lceil \log _2 n\rceil$ code would also suffice in the noiseless setting.
\end{itemize}
In both algorithms, the vectors $\cv_1,\dotsc,\cv_n$ in Figure \ref{fig:concat_mask} are chosen to have length $O(k \log k)$.  Since it is unknown in advance which blocks of tests will correspond to isolated defectives, every block in the final test matrix must incur $O(\log n)$ tests, for a total of $t = O(k \cdot \log k \cdot \log n)$.

{\bf Comparison with our techniques.} Our group testing strategy, BMC, is outlined in Section \ref{sec:overview}, but at this point we can already highlight some of the key differences to the techniques described above:
\begin{itemize}
    \item While the strings $\cv_1,\dotsc,\cv_n$ are mutually independent in \cite{Cai13,Lee15a}, this is far from being true in BMC.  Instead, we independently generate a {\em much smaller} number of strings, and then let each $\cv_i$ equal one of these strings selected uniformly at random.  Hence, there are a large number of {\em repeated strings}; we only seek to ensure that there are no repetitions {\em among the defectives}.
    \item The first step of our decoding algorithm is to {\em identify the strings associated with defectives}, whereas in \cite{Cai13,Lee15a} the goal is to {\em identify the blocks corresponding to isolated defectives}.  These are distinct goals, and are solved using different techniques: In contrast with the above-outlined approaches used by GROTESQUE and SAFFRON, we can achieve our goal by performing a simple one-by-one check on the small set of strings mentioned in the previous dot point.
    \item We not only seek for each defective item $i \in \Kc$ to have a single isolated index in $\cv_i$, but rather, $O(\log n)$ of them.  This may sound like a more restrictive condition that potentially {\em increases} the number of tests, but it is made up for by the following crucial observation: We do {\em not} blow up the number of tests by a factor of $O(\log n)$ in order to ensure $O(\log n)$ ``collision-free'' tests for each defective item.  Instead, we treat any collisions as erasures, and control for them using erasure-correcting coding.\footnote{GROTESQUE \cite{Cai13} employs expander codes to combat {\em random noise}, but this is a distinct notion to our idea of using erasure-correction to combat collisions, and the former technique does not transfer readily to the latter.}  Hence, instead of seeking $O(\log n)$ specific collision-free tests, we allow the defectives to {\em share the damage of collisions} in a controlled manner.
\end{itemize}
We also briefly contrast BMC with the list-decoding approach \cite{Deb05,Ngo11}, which first finds a ``small enough'' superset of $\Kc$ (e.g., of size $O(k)$ or $O({\rm poly}(k))$), and uses further tests to resolve the false positives.  The first decoding step of BMC finds up to $k$ masking strings, and the union of sets of items associated with those masking strings is a superset of $\Kc$ with high probability.  However, the number of items assigned to each masking string is in fact very large, leading to this superset having size $O\big( \frac{n}{k \log k} \big)$.  In sparse settings (e.g., $k = O({\rm poly}(\log n))$), this size far exceeds ${\rm poly}(k)$, indicating that BMC is fundamentally different to list decoding.}

\subsection{Related Techniques for Compressive Sensing} \label{sec:cs} 

{While we focused on the most related group testing works when discussing the idea of encoding items' indices into $\Xv$, closely-related techniques appeared prior to those works in the context of {\em compressive sensing} (CS) \cite{Cor06,Gil06,Gil07,Ber08a,Che17,Che19}.  In this problem, the goal is to design a (real-valued) measurement matrix $\Xv \in \RR^{t \times n}$ that permits the recovery of $k$-sparse vectors $\beta \in \RR^n$ via linear measurements of the form $\yv = \Xv\beta$ (or similarly with noise added).  

Group testing can be viewed as a Boolean counterpart to CS  \cite{Gil08,Ati12}, but there are also important differences between the two.  In particular, in contrast with CS, group testing is inherently non-linear due to the ``OR'' operation.  As a result, several decoding techniques used in compressive sensing cannot be used in group testing , notably including the idea of ``subtracting off'' previously-found values in the sparse vector.  Due to these differences, CS results often differ significantly from group testing.  For instance, it is possible to attain the ``for-all'' guarantee in CS with $t = O(k \log n)$ measurements \cite{Can08}, in stark contrast with the $\Omega\big(\min\big\{k^2 \frac{\log n}{\log k},n\big\}\big)$ lower bound for combinatorial group testing \cite{Dya82}.

An early CS work of Cormode and Muthukrishnan \cite{Cor06} followed the structure of Figure \ref{fig:concat_mask}, utilizing a disjunct matrix in the first step.  This approach could readily be applied to group testing, but would not be suited to the probabilistic setting, due to the $\Omega\big(\min\big\{k^2 \frac{\log n}{\log k},n\big\}\big)$ number of rows required for a disjunct matrix.  The number of tests was subsequently reduced using test designs based on random selection \cite{Gil06,Gil07}, an idea also used the above-outlined group testing works \cite{Cai13,Lee15a}.  In fact, the CS designs in \cite{Gil06,Gil07} use more sophisticated random selection techniques based on random binning with variable bin sizes, but these appear to be less suited to group testing due to them strongly exploiting the linearity of the measurements as discussed above.  

More recent CS works utilized alternative constructions based on expanders and extractors \cite{Ber08a,Che17,Che19}.  In addition, \cite{Che19} showed that constructions of this kind can also be applied to combinatorial group testing, though we are not aware of any similar attempts for probabilistic group testing.

Despite these advances in the context of compressive sensing, we are not aware of any construction that can be adapted to provide a probabilistic group testing algorithm with $O(k \log n)$ tests and $\poly(k \log n)$ decoding time.  Essentially, each of these works appears to exhibit one or both of the following roadblocks: (i) the number of tests is inherently limited to behave as $\Omega(k \cdot \log k \cdot \log n)$ or higher, thus failing to improve on \cite{Cai13,Lee15a}; (ii) the decoding procedure crucially exploits the linearity in the compressive sensing model. 

Finally, to our knowledge, none of the existing CS works include the unique aspects of BMC highlighted at the end of Section \ref{sec:techniques}, namely, the notion of assigning non-unique strings to indices in $\{1,\dotsc,n\}$, the initial decoding step of identifying strings associated with non-zero entries, or the method of controlling for collisions via erasure-correction coding.
}

\section{Bit Mixing Coding: Test Design and Decoding} \label{sec:bmc}

In this section, we provide the details of BMC, as well as formally stating the guarantees on the number of tests and decoding time.   The main subsequent notation is shown in Table \ref{tbl:notation}

\begin{table}
    \centering
    \caption{Notation used throughout the paper. \label{tbl:notation}}
    \begin{tabular}{|c|l|}
        \hline
        $n$ & Number of items \\
        \hline
        $k$ & Maximum number of defective items (known to the algorithm) \\
        \hline
        $k'$ & Actual number of defective items (not known to the algorithm) \\
        \hline
        $t$ & Total number of tests \\
        \hline
        $\Kc$ & Defective set \\
        \hline
        $\Kchat$ & Estimate of the defective set decoded in the second batch \\
        \hline
        $w$ & Weight of a masking string / block length of a codeword \\
        \hline
        $\sv,\svtil$ & Masking strings \\
        \hline
        $\delta$ & Parameter controlling the error probability \\
        \hline 
        $t_1$, $t_2$ & Number of tests in the first and second batches \\
        \hline
        $\Sc$ & Low collision set \\
        \hline
        $\Lc$ & Set of masking strings decoded in first batch \\
        \hline
        $\Cc$ & Codebook (possibly non-binary) with block length $w$ \\
        \hline
        $\Ac$ & Symbol alphabet for the codebook $\Cc$ \\
        \hline
        $\ell$ & Number of bits to represent a symbol in $\Ac$, i.e., $\ell = \log_2|\Ac|$ \\
        \hline
    \end{tabular}
\end{table}

\subsection{Overview of Bit Mixing Coding} \label{sec:overview}

Here we provide a brief overview of our test design and decoding strategy.  Given integers $t_1$, $t_2$, and $w$, the testing is done in two batches, described below (we use the terminology {\em batches} instead of {\em stages} to highlight that the testing remains entirely non-adaptive).  A rough illustration of these batches is shown in Figure \ref{fig:bmc1}.  Subsequently, the function $\log(\cdot)$ has base $e$.

In the first batch, each item is assigned a binary string of length $t_1$ and weight $w$, chosen uniformly at random with replacement from a carefully designed set $\Sc \subseteq \{0,1\}^{t_1}$.  We refer to these strings as {\em masking strings} (see Section \ref{sec:masking}).  The number of strings in $\Sc$ is typically much smaller than the number of items, implying that a given item's string is unlikely to be unique.  However, we do seek uniqueness {\em among the defective items}.

The testing sub-matrix $\Xv_1 \in \{0,1\}^{t_1 \times n}$ simply arranges the items' strings in columns (or rows in Figure \ref{fig:bmc1}, which shows $\Xv^T$).  Given the resulting $t_1$ test outcomes, the decoder searches through the strings in $\Sc$ and seeks to determine which ones were assigned to {\em some} defective item, but without attempting to identify the index of that item.

In the second batch, the testing sub-matrix $\Xv_2 \in \{0,1\}^{t_2 \times n}$ has a similar structure to $\Xv_1$, but with each bit replaced by a constant number $\ell$ of bits; hence, $t_2 = \ell t_1$.  Any entry that was zero in $\Xv_1$ is simply replaced by a string of $\ell$ zeros.  On the other hand, for any given column, each of the $w$ entries equal to one is replaced by the binary description of a symbol from a codeword.  Specifically, each item has a {\em unique} codeword of length $w$ on an alphabet $\Ac$ of size $2^{\ell}$, and that codeword is an erasure-coded representation of the item's index.

\begin{figure}
  \centering
  \includegraphics[width=0.95\columnwidth]{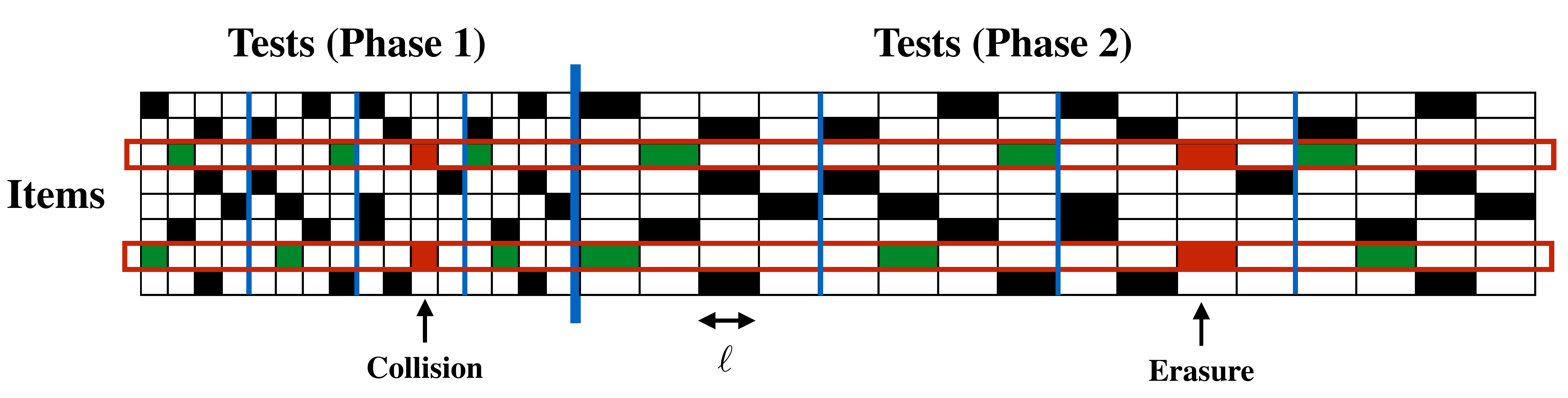}
  \caption{Illustration of the BMC-based (transpose of the) group testing matrix. In the first batch, each item is assigned a constant-weight masking string with weight $w = 4$, and in the second batch, the same structure is repeated in length-$\ell$ segments corresponding to symbols on a larger alphabet. {For compactness, we consider only $n = 8$ items and $k = 2$ defectives, and each masking string has length $t_1 = 16$ even though the choice in our mathematical analysis would correspond to $t_1 = 4kw = 32$.}  The $k=2$ rows corresponding to defective items are highlighted, and we observe that their masking strings collide in the third segment of length $4$. }
  \label{fig:bmc1}
\end{figure}

The idea of the decoding procedure is as follows.  Suppose that we have designed $\Sc$ such that with high probability, (i) the first batch of tests allows the decoder to successfully identify which $k$ (or fewer) masking strings were assigned to defective items; and (ii) any one of these strings collides (i.e., overlaps in the indices equaling $1$) with the union of the $k-1$ other strings in at most $\frac{w}{2}$ indices.\footnote{We will introduce these as key properties of {\em low collision sets} in Section \ref{sec:masking}.}  These properties ensure that from the second batch of tests, the decoder can perfectly recover the symbols (with values in $\Ac$) corresponding to the $\frac{w}{2}$ (or more) non-colliding locations of $1$'s in each defective item's masking string, while marking the symbols in the other $\frac{w}{2}$ (or fewer) locations as erasures.  Any length-$w$ code on $\Ac$ capable of correcting the worst-case erasure of half the codeword symbols can therefore recover this defective item's codeword, and hence also the index of the item.

In the following, we focus on the case that $k \to \infty$ as $n \to \infty$.  The case $k = O(1)$ is in fact much simpler, but also more convenient to handle separately, so it is deferred to Appendix \ref{sec:fixed_k}.

\subsection{Masking Strings and Low Collision Sets} \label{sec:masking}

A key technical challenge in our analysis is proving the existence of the set $\Sc \subseteq \{0,1\}^{t_1}$ satisfying the properties overviewed in Section \ref{sec:overview}.  We proceed by presenting the relevant definitions and results towards achieving this goal.

We begin with the formal definition of a masking string.  This definition depends on the maximum number of defectives $k$ and a length parameter $w$, and leads to a number of tests in the first batch given by $t_1 = 4kw$.

\begin{defn} \label{def:masking}
    We say that $\sv \in \{0,1\}^{t_1}$ is a {\em $(k,w)$ masking string} if it is the concatenation of $w$ (typically different) binary substrings of length $4k$, with each substring having a Hamming weight of $1$.
\end{defn}

We use the simplified terminology {\em masking string} when the parameters $k$ and $w$ are clear from the context.  Clearly, any $(k,w)$ masking string has length $t_1 = 4kw$ and weight $w$.

Our group testing design will rely crucially on a subset $\Sc$ of masking strings that are sufficiently ``well-separated on average''.  Specifically, when we assign masking strings from $\Sc$ to items uniformly at random with replacement, we seek to ensure that (i) upon observing the bitwise ``OR'' of the $k' \le k$ masking strings assigned to defective items, the decoder can identify the corresponding $k'$ (or fewer) individual strings in $\Sc$; and (ii) each of these $k'$ masking strings has at most half of its $1$'s in common with the union of the other $k'-1$.  The following definition formally introduces sufficient requirements for this purpose.

\begin{defn}
    \label{def:lcs}
    A set $\Sc \subseteq \{0,1\}^{t_1}$ of $(k,w)$ masking strings is a {\em $(k,w,\delta)$ low collision set} (LCS) if it satisfies the following property for any given integer $k' \le k$ and any given index $i \in \{1,\dotsc,k'\}$: If we choose $k'$ strings $\svtil_1,\dotsc,\svtil_{k'}$ from $\Sc$ uniformly at random with replacement, then the following conditions hold with probability at least $1-\delta$:
    \begin{enumerate}
        \item The multi-set $\Sctil = \{\svtil_1,\ldots, \svtil_{k'}\}$ is such that all $\svtil \in \Sc \setminus \Sctil$ satisfy $\sum_{j=1}^{k'} \svtil^T \svtil_j \le \frac{w}{2}$;
        \item The multi-set $\Sctil^{(-i)} = \{\svtil_1, \ldots, \svtil_{i-1}, \svtil_{i+1}, \ldots, \svtil_{k'}\}$ is such that $\sum_{1 \le j \le k' \,:\, j \ne i} \svtil_i^T \svtil_j \le \frac{w}{2}$.
    \end{enumerate}
\end{defn}

The bulk of our technical analysis is devoted to proving the following lemma, establishing the existence of an LCS with certain requirements on the size $|\Sc|$ and parameters $(k,w,\delta)$.  To simplify the analysis, we state the result in an asymptotic form, but non-asymptotic variants can easily be deduced from the proof.  In addition, we make no effort to optimize the constant factors, which could also be improved by refining our analysis.

\begin{lem} \label{lem:lcs}
    Consider any sequence of $(k,\delta)$ pairs such that $k \to \infty$, $\delta \to 0$, and $\delta \ge \frac{1}{k^2}$.  If $w$ satisfies
    \begin{equation}
        w \ge 70\, \log \frac{k}{\delta}, \label{eq:choice_w}
    \end{equation}
    then for sufficiently large $k$ there exists a $(k,w,\delta)$ low collision set (LCS) $\Sc$ with cardinality $|\Sc| = \frac{2k}{\delta}$.
\end{lem}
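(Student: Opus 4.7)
The plan is to use the probabilistic method. Let $\Sc$ be random, consisting of $M := 2k/\delta$ independent uniformly random $(k,w)$ masking strings (equivalently, $wM$ independent uniformly random positions in $\{1,\dotsc,4k\}$, one per substring). I will show that with positive probability, this $\Sc$ simultaneously satisfies both LCS conditions for every pair $(k',i)$ with $k' \le k$ and $i \le k'$; existence of an LCS then follows. Since the universe of $(k,w)$ masking strings has size $(4k)^w$, which is enormous once $w \gtrsim \log(k/\delta)$, I may assume throughout that the drawn strings are pairwise distinct.

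Fix a pair $(k',i)$. Writing $\svtil_j = \mathbf{p}_{J_j}$, where $\mathbf{p}_1,\dotsc,\mathbf{p}_M$ are the elements of $\Sc$ and $J_1,\dotsc,J_{k'}$ are i.i.d.\ uniform on $\{1,\dotsc,M\}$, let $A := \{J_j \ne J_i \text{ for all } j \ne i\}$. A union bound gives $\PP(A^{\mathrm{c}} \mid \Sc) \le (k'-1)/M \le \delta/2$, so it will suffice to show that, with high probability over $\Sc$, each LCS condition fails on $A$ with conditional probability at most $\delta/4$. On $A$, I partition $\{1,\dotsc,k'\}\setminus\{i\}$ by the common value of $J_j$, yielding group sizes $g_d$ summing to $k'-1$, so that the inner-product sum $X := \sum_{j\ne i}\svtil_i^\top \svtil_j$ decomposes as $X = \sum_d g_d\,\svtil_i^\top\mathbf{p}_{\ell_d}$, where each $\svtil_i^\top\mathbf{p}_{\ell_d} \sim \mathrm{Binomial}(w,1/(4k))$ independently across $d$. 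A Chernoff argument built on the MGF bound
\[
    \EE\!\left[e^{\lambda X} \,\middle|\, \Sc, A, (J_j)\right] \;\le\; \exp\!\left(\frac{w}{4k}\sum_{d}\bigl(e^{\lambda g_d}-1\bigr)\right),
\]
the convexity inequality $(e^{\lambda g}-1)/g \le (e^{\lambda m}-1)/m$ for $g \le m := \max_d g_d$, and the choice $\lambda = (\log 2)/m$, then yields a bound of the form $\PP(X > w/2 \mid \cdot) \le \exp(-cw/m)$ with $c = \tfrac{1}{2}\log 2 - \tfrac{1}{4} \approx 0.097$. Condition~1 is handled analogously, after an additional union bound over the at-most-$M$ candidates $\svtil \in \Sc \setminus \Sctil$, whose $\log M = O(\log(k/\delta))$ overhead is absorbed by the hypothesis $w \ge 70 \log(k/\delta)$.

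The main obstacle will be that the Chernoff tail $\exp(-cw/m)$ degrades as the maximum group size $m$ grows, whereas Markov's inequality together with a union bound over the at most $k^2$ pairs $(k',i)$ requires a uniform \emph{unconditional} failure bound of order $o(\delta/k^2)$. I will combine the Chernoff estimate above with a classical balls-into-bins tail on the maximum bin load, namely $\PP(\max_d g_d \ge m) \le k\delta^{m-1}/m!$, via a cutoff on $m$ chosen according to the regime of $\delta$ (different cutoffs being tight near the boundaries $\delta \asymp 1/k^2$ and $\delta = o(1)$). The generous factor $70$ in the hypothesis $w \ge 70\log(k/\delta)$ should provide enough slack to absorb all the accumulated logarithmic losses across the full range $1/k^2 \le \delta = o(1)$. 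Once the cutoff computation closes, applying Markov and a union bound over $(k',i)$ delivers the existence of a $(k,w,\delta)$ LCS of size $|\Sc|=2k/\delta$, as claimed.
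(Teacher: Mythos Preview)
Your plan has a genuine gap at the Markov-plus-union-bound step. You correctly observe that this step requires the unconditional probability $\PP(X > w/2,\,A)$ to be $o(\delta/k^2)$ (or $o(\delta/k)$ after exploiting symmetry in $i$), but the Chernoff bound $\exp(-cw/m)$ with $c \approx 0.097$ cannot deliver this across the stated range of $\delta$. Take for instance $\delta = 1/\log k$. Then $M = 2k\log k$, and throwing $k'-1 \approx k$ balls into $M$ bins yields maximum load $m = \Theta(\log k/\log\log k)$ \emph{with probability bounded away from zero}. Since $w = 70\log(k/\delta) = \Theta(\log k)$, your tail bound becomes $\exp(-cw/m) = (\log k)^{-\Theta(1)}$, which is only polylogarithmically small---nowhere near the required $o(1/(k\log k))$. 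No cutoff $m^*$ can close this: a small cutoff makes $\PP(m > m^*)$ too large (it is $\Theta(1)$ until $m^*$ reaches the typical max load), while a large cutoff makes $\exp(-cw/m^*)$ too large. The ``generous factor $70$'' does not help, because the shortfall is polynomial in $k$, not merely a constant.

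The underlying difficulty is structural: by conditioning on the selection indices $(J_j)$ first and then averaging over the random $\Sc$, you turn the problem into bounding a \emph{weighted} sum $\sum_d g_d B_d$ of independent binomials, and the weights $g_d$ force the effective range in any Bernstein/Chernoff bound to scale with $m$. The paper avoids this by reversing the order of the two sources of randomness. It first shows, via concentration over the random construction, that $\Sc$ satisfies a \emph{deterministic} ``promising set'' property with high probability: for every $\svtil \in \Sc$, the empirical mean of $\svtil^\top\sv$ over $\sv \in \Sc\setminus\{\svtil\}$ is close to $w/(4k)$, the maximum deviation is at most a fixed constant (specifically $6.1$), and the empirical second moment is at most $w/(2k)$. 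Then, for any fixed $\Sc$ with these properties, it applies Bernstein directly to the i.i.d.\ sum $\sum_j Z_j$ with $Z_j = \svtil^\top\svtil_j - \mu(\svtil,\Sc)$. The crucial gain is that $|Z_j| \le 6.1$ regardless of how many $J_j$'s coincide, so the Bernstein exponent is $-\Theta(w)$ rather than $-\Theta(w/m)$, and one obtains $(\delta/k)^2$ directly---no Markov step, no cutoff on $m$, and no case analysis on the size of $\delta$.
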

\begin{proof}
    See Section \ref{sec:lcs}.
\end{proof}

While the construction used to prove Lemma \ref{lem:lcs} is randomized, the proof provides sufficient conditions for being an LCS that hold with high probability, {and that can be verified in time $O\big( |\Sc|^2 \cdot w \big)$.  We will later set $w = O(\log n)$ and $\delta = \frac{1}{k \log k}$, in which case substituting $|\Sc| = \frac{2k}{\delta}$ gives verification time $O(k^4 (\log k)^4 \log n) = \poly(k \log n)$.} In contrast, given a set $\Sc$ of masking strings, it appears to be difficult to directly verify whether the set is an LCS in an efficient manner.

\subsection{Encoding and Decoding: First Batch of Tests} \label{sec:batch1}

\begin{algorithm}
    \caption{\label{alg:masking}
    Test design (encoding) and masking string identification (decoding) for the first batch of tests. }
    \small
    \begin{algorithmic}[1]
        \Statex \hspace*{-6mm} {\em Global parameters:} Number of items $n$, triplet $(k,w,\delta)$, low collision set $\Sc$ of size $\frac{2k}{\delta}$
        
        \Statex

        \Statex \hspace*{-6mm} {\em Test design}
        \State {\bf foreach} $i = 1,\dotsc,n$ {\bf do}
        \State \hspace*{4mm} Let the $i$-th column of $\Xv_1$ be a uniformly random element from the set $\Sc$
        \State {\bf endfor}
        
        \Statex

        \setcounter{ALG@line}{0}
        \Statex \hspace*{-6mm} {\em Masking string identification}
        ({\bf input:} A received binary string $\yv_1$ of $4kw$ bits;
        {\bf output:} A list $\Lc$ of masking strings)
        \State {\bf foreach} $\sv \in \Sc$ {\bf do}
        \State \hspace*{4mm} {\bf if} {$\sv^T \yv_1 = w$} {\bf then} include $\sv$ in the output list $\Lc$ \label{line:decision}
        \State {\bf endfor}
    \end{algorithmic}
    \normalsize
\end{algorithm}

The test design and decoding procedure associated with the first batch of tests are depicted in Algorithm \ref{alg:masking}.  The test design simply assigns a masking string to each item uniformly at random from $\Sc$ with replacement, and arranges these in columns to form $\Xv_1$.  Given the resulting test outcome vector $\yv_1 \in \{0,1\}^{t_1}$, the decoder constructs a list $\Lc \subseteq \Sc$ of masking strings believed to correspond to defective items by adding only the strings having sufficient overlap with $\yv_1$ in the locations of $1$'s.

The following lemma provides a formal statement of successful masking string identification.

\begin{lem} \label{lem:batch1}
    Suppose that there are $k' \le k$ defective items, and their associated masking strings $\{\svtil_1,\dotsc,\svtil_{k'}\}$ satisfy the first condition of Definition \ref{def:lcs}.  Then, the test design and masking string identification procedure in Algorithm \ref{alg:masking} lead to an estimate $\Lc$ containing $\{\svtil_1,\dotsc,\svtil_{k'}\}$ and no other elements of $\Sc$.
\end{lem}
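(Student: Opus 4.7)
The plan is to establish two directions: (i) every masking string in $\Sctil$ passes the decoding test $\sv^T \yv_1 = w$ on line \ref{line:decision} of Algorithm \ref{alg:masking}, and (ii) no element $\svtil \in \Sc \setminus \Sctil$ passes this test. Together these imply that $\Lc$ equals the set of distinct elements of the multi-set $\Sctil$, which is exactly the conclusion of the lemma.

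For direction (i), the key observation is that the test outcome vector equals the bitwise OR, $\yv_1 = \bigvee_{j=1}^{k'} \svtil_j$, so the support of each individual $\svtil_i$ is contained in the support of $\yv_1$. Since $\svtil_i$ is a $(k,w)$ masking string and therefore has Hamming weight exactly $w$, this support containment immediately gives $\svtil_i^T \yv_1 = |\supp(\svtil_i)| = w$. Hence the algorithm adds every $\svtil_i$ to $\Lc$.

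For direction (ii), fix any $\svtil \in \Sc \setminus \Sctil$. Again using $\yv_1 = \bigvee_{j=1}^{k'} \svtil_j$, I would apply a simple union bound on supports:
\[
    \svtil^T \yv_1 \;=\; \Bigl|\supp(\svtil) \cap \bigcup_{j=1}^{k'} \supp(\svtil_j)\Bigr| \;\le\; \sum_{j=1}^{k'} \bigl|\supp(\svtil) \cap \supp(\svtil_j)\bigr| \;=\; \sum_{j=1}^{k'} \svtil^T \svtil_j.
\]
By the first condition of Definition \ref{def:lcs}, which is assumed to hold, the right-hand side is at most $w/2$, which is strictly less than $w$. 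Therefore the test on line \ref{line:decision} fails, and $\svtil$ is not added to $\Lc$.

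The argument is essentially immediate once one unpacks the OR operation; the main ingredient is the union-bound inequality above, together with the Hamming weight and the assumed collision property. I do not anticipate any real obstacle — in particular, all randomness has been absorbed into the hypothesis, so this lemma is a deterministic consequence of Definition \ref{def:lcs}.
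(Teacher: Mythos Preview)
Your proposal is correct and follows essentially the same two-direction argument as the paper's proof: for strings assigned to defectives, the support containment gives $\svtil_i^T \yv_1 = w$; for strings not assigned to any defective, the union bound $\svtil^T \yv_1 \le \sum_{j} \svtil^T \svtil_j \le \frac{w}{2}$ (using the first LCS condition) shows the test fails. The paper's proof is slightly terser but identical in substance.
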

\begin{proof}
    It is trivial that any masking string $\svtil_i$ assigned to a defective item will be included in $\Lc$:  Any index where its masking string is $1$ will lead to a positive test, yielding $\svtil_i^T \yv_1 = w$ since the weight of each masking string is $w$.  

    On the other hand, if $\svtil \in \Sc$ is not assigned to any defective item, then the first property of Definition \ref{def:lcs} ensures that the sum of overlaps between $\svtil$ and the elements of $\{\svtil_1,\dotsc,\svtil_{k'}\}$ is at most $\frac{w}{2}$.  Since $\yv_1$ is the bit-wise ``OR'' of $\{\svtil_1,\dotsc,\svtil_{k'}\}$, this implies that $\svtil^T \yv_1 \le \frac{w}{2}$. 
\end{proof}

\subsection{Encoding and Decoding: Second Batch of Tests} \label{sec:batch2}

\begin{algorithm}
    \caption{\label{alg:data}
    Test design (encoding) and item identification (decoding) for the second batch of tests.}
    \small
    \begin{algorithmic}[1]
        \Statex \hspace*{-6mm} {\em Global parameters:} Number of items $n$, triplet $(k,w,\delta)$, multi-set $\{\sv_1,\dotsc,\sv_n\}$ of masking strings assigned to items in first batch, parameter $\ell$ and alphabet $\Ac$ of size $2^{\ell}$, codebook $\Cc = \{\cv_1,\dotsc,\cv_n\}$ with $n$ codewords in $\Ac^w$.
        \Statex
        \Statex \hspace*{-6mm} {\em Test design}
        \State {\bf foreach} $i = 1,\dotsc,n$ {\bf do}
        \State \hspace*{4mm} Initialize $\xv$ to be the empty string
        \State \hspace*{4mm} Let $\sv = (s_1,\dotsc,s_{4kw})$ be the masking string assigned to item $i$ in the first batch
        \State \hspace*{4mm} {\bf foreach} $j = 1,\dotsc,4kw$ {\bf do}
        \State \hspace*{4mm} \hspace*{4mm} {\bf if} $s_j = 0$ {\bf then} append $\ell$ zeros to $\xv$
        \State \hspace*{4mm} \hspace*{4mm} {\bf else} Append length-$\ell$ binary representation of the next symbol of $\cv_i$ to $\xv$
        \State \hspace*{4mm} {\bf endfor}
        \State \hspace*{4mm} Fill in the $i$-th column of $\Xv_2$ with the entries of $\xv$
        \State {\bf endfor}

        \Statex

        \setcounter{ALG@line}{0}
        \Statex \hspace*{-6mm} {\em Item identification}
        ({\bf input:} Received string $\yv_2$ of length $4kw\ell$, list $\Lc$ of decoded masking strings returned by Algorithm~\ref{alg:masking};
        {\bf output:} Estimate $\Kchat$ of the defective set)
        \State Construct $\yvtil \in \Ac^{4kw}$ by converting $\yv_2 \in \{0,1\}^{4kw\ell}$ from binary to the alphabet $\Ac$
        \State {\bf foreach} $\sv \in \Lc$ {\bf do}
        \State \hspace*{4mm} Initialize $\uv$ to be the empty string
        \State \hspace*{4mm} {\bf foreach} $i=1,\dotsc,4kw$ {\bf do}
        \State \hspace*{4mm}\hspace*{4mm} {\bf if} ($s_i = 1$) and
         (there exists no $\svtil \in \Lc$ such that $\svtil \ne \sv$ and $\stil_i = 1$) {\bf then}
        \State \hspace*{4mm}\hspace*{4mm}\hspace*{4mm}
        Append the $i$-th symbol of $\yvtil$ to $\uv$;
        \State \hspace*{4mm}\hspace*{4mm} {\bf else if} ($s_i = 1$) {\bf then}
        \State \hspace*{4mm}\hspace*{4mm}\hspace*{4mm}
        Append the erasure symbol to $\uv$;
        \State \hspace*{4mm} {\bf endfor}
        \State \hspace*{4mm} $\Ic \leftarrow $ decoder for $\Cc$ applied to $\uv$ to return an index in $\{1,\dotsc,n\}$
        \State \hspace*{4mm} Include $\Ic$ in the output set $\Kchat$
        \State {\bf endfor}
    \end{algorithmic}
    \normalsize
\end{algorithm}

The test design and decoding procedure associated with the second batch of tests are depicted in Algorithm \ref{alg:data}.  As discussed in Section \ref{sec:overview}, the idea is to copy the structure of $\Xv_1$, but replace each bit by a sequence of $\ell$ bits.  Any ``$0$'' bit is trivially mapped to a string of $\ell$ zeros, whereas any ``$1$'' bit is replaced by the binary representation of a codeword symbol.  The codeword has length $w$ and alphabet $\Ac$, whose size is $|\Ac| = 2^{\ell}$, and the corresponding codebook $\Cc = \{\cv_1,\dotsc,\cv_n\}$ is chosen to have good worst-case erasure correction guarantees (see Section \ref{sec:erasure}).  For item $i$, the codeword $\cv_i$ is used.

For item identification, any collisions between masking strings in $\Lc$ (returned from the first batch) are treated as erasures, whereas in the absence of a collision, the corresponding length-$\ell$ binary string from the test outcome vector $\yv_2$ is mapped to a symbol from $\Ac$.  For each $\sv \in \Lc$, if there are sufficiently few erasures, then we can recover the corresponding codeword $\cv_i$ via erasure-correcting decoding, and hence identify the defective item index $i \in \{1,\dotsc,n\}$.

The following lemma formally states the requirements on $\Cc$, along with sufficient conditions under which the decoding succeeds.

\begin{lem} \label{lem:batch2}
    Suppose that there are $k' \le k$ defective items, and their associated masking strings $\{\svtil_1,\dotsc,\svtil_{k'}\}$ satisfy the second condition of Definition \ref{def:lcs}.  If the first batch successfully produces $\Lc = \{\svtil_1,\dotsc,\svtil_{k'}\}$, and the decoder of $\Cc$ is able to correct an arbitrary pattern of $\frac{w}{2}$ erasures, then the test design and item identification procedure in Algorithm \ref{alg:data} lead to successful recovery, i.e., $\Kchat = \Kc$.
\end{lem}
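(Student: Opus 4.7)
The plan is to verify that the item identification procedure of Algorithm~\ref{alg:data} correctly reconstructs each defective's codeword (and hence its index) by showing that the erasure pattern the algorithm produces mirrors the true pattern of collisions between defective masking strings, and that for each defective at most $w/2$ symbols are erased. A preliminary observation is that the second condition of Definition~\ref{def:lcs} forces the strings $\svtil_1,\dotsc,\svtil_{k'}$ to be pairwise distinct: if $\svtil_i = \svtil_j$ for some $j \ne i$, then $\svtil_i^T \svtil_j = w$, contradicting $\sum_{j \ne i} \svtil_i^T \svtil_j \le w/2$. Therefore under the hypothesis $\Lc = \{\svtil_1,\dotsc,\svtil_{k'}\}$ we have $|\Lc| = k'$, and the outer loop of the item identification step runs exactly $k'$ times, once per defective.

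The main substantive step is to analyze a single iteration corresponding to a fixed defective~$i$. By the group testing model in~\eqref{eq:gt_noiseless_model}, test outcomes depend only on defective items' columns of $\Xv_2$; so for any position $p \in \{1,\dotsc,4kw\}$ with $(\svtil_i)_p = 1$, the length-$\ell$ chunk of $\yv_2$ associated with $p$ equals the bitwise OR, over defectives $j \in \Kc$ with $(\svtil_j)_p = 1$, of the binary representation of the codeword symbol of $\cv_j$ aligned with position $p$. When the algorithm's check ``no other $\svtil \in \Lc$ has $\stil_p = 1$'' succeeds, the OR collapses to item~$i$'s contribution alone, so converting the chunk back to the alphabet $\Ac$ yields the correct symbol of $\cv_i$; otherwise, the chunk is marked as an erasure. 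Because $\Lc$ coincides with the true set of defective masking strings, the algorithm's $\Lc$-based collision check faithfully reflects the actual collision-induced corruption in $\yv_2$.

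Finally, I would bound the erasures. Each erased position contributes at least~$1$ to $\sum_{j \ne i}\svtil_i^T \svtil_j$, which is at most $w/2$ by Definition~\ref{def:lcs}; hence $\uv$ equals $\cv_i$ except in at most $w/2$ erased symbols, and the assumed worst-case erasure-correction capability of $\Cc$ recovers $\cv_i$ (and thus the index~$i$) exactly. Since this holds for every $\svtil_i \in \Lc$, we obtain $\Kchat = \Kc$. The argument is largely mechanical; the only delicate step to get right is the alignment between the algorithm's $\Lc$-based erasure decisions and the physical OR of contributions in $\yv_2$, which holds precisely because $\Lc$ equals the set of defective masking strings.
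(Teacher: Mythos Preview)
Your proposal is correct and follows essentially the same approach as the paper's proof: deduce distinctness of the $\svtil_i$ from the second LCS condition, argue that non-colliding positions yield the correct codeword symbol while collisions yield at most $w/2$ erasures, and invoke the erasure-correction guarantee of $\Cc$. Your write-up is slightly more explicit than the paper's in justifying why a non-colliding chunk of $\yv_2$ equals item $i$'s symbol (via the OR collapsing when $\Lc$ matches the defectives' masking strings), which is a nice touch.
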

\begin{proof}
    The second condition of Definition \ref{def:lcs} implies that $\Lc = \{\svtil_1,\dotsc,\svtil_{k'}\}$ contains no duplicates, and also that for any such $\svtil_i \in \Lc$, at most $\frac{w}{2}$ of the indices of $1$'s collide with those of {\em any} of the other strings in $\Lc$.  Hence, when $\svtil_i$ is processed in the outer loop of item identification in Algorithm \ref{alg:data}, we have the following:
    \begin{itemize}
        \item Whenever there is a collision, an erasure symbol is added to $\uv$, and this occurs at most $\frac{w}{2}$ times;
        \item Whenever there is no collision, the correct codeword symbol from $\cv_i$ is added to $\uv$.
    \end{itemize}
    Hence, $\uv$ equals the desired length-$w$ codeword $\cv_i$ with at most $\frac{w}{2}$ entries replaced by the erasure symbol, and by our assumption on the decoder of $\Cc$, the correct codeword $\cv_i$ (or equivalently, the correct index $i$) is identified.
\end{proof}

\subsection{Choice of Erasure-Correcting Code} \label{sec:erasure}

The problem of decoding in the presence of worst-case erasures has been extensively studied in coding theory.  There are many erasure-correcting codes that we could use in Algorithm \ref{alg:data}, with various trade-offs in the subsequent mathematical analysis and decoding time.  For instance:
\begin{itemize}
    \item In a preliminary version of this work \cite{Bon19}, we used Reed-Solomon codes, which have the convenient feature of being maximum Maximum Distance Separable (MDS).  However, when applied to group testing, their large alphabet size (i.e., increasing in the block length) leads to an $O(k \cdot \log k \cdot \log\log k)$ term in the number of tests.\footnote{In \cite{Bon19} the logarithmic factors were also not optimized, so the analysis therein actually would actually lead to an $O(k \cdot \log^2 k \cdot \log\log k)$ term.}  This suffices for attaining the optimal $t = O(k \log n)$ scaling in sufficiently sparse regimes, but here we prefer to adopt an approach that does so in both sparser and denser regimes.
    \item In Section \ref{sec:noisy}, we discuss the use of binary codes (i.e., $\Ac = \{0,1\}$), which makes Algorithm \ref{alg:data} conceptually simpler, and can be useful in noisy scenarios.  However, this requires several of the constants to be modified to less favorable values throughout the analysis.  For instance, the length $t_1 = 4kw$ may be increased to a value such as $10kw$, and the proportion of erasures permitted may decrease from $\frac{1}{2}$ to a smaller value such as $\frac{1}{10}$.
    \item We ideally seek linear decoding time in the block length, though polynomial decoding time is also acceptable given that the code length is only $w = O(\log n)$.
\end{itemize}
As a suitable trade-off of these various aspects, we found the following code construction from \cite{Alo96} to be convenient, providing near-MDS erasure correction with a {\em bounded} alphabet size.  The code construction is based on expanders.

\begin{lem} \label{lem:erasure}
    {\em \cite[Thm.~1]{Alo96}}
    For any $r \in (0,1)$ and arbitrarily small $\epsilon > 0$, there exists an alphabet $\Ac$ whose size is a constant depending only on $\epsilon$, and a codebook $\Cc$ (with codeword symbols on $\Ac$) and associated encoder/decoder pair, such that the following properties hold:
    \begin{itemize}
        \item $\Cc$ has rate $r$, i.e., the number of codewords is $|\Ac|^{wr}$, where $w$ is the block length;
        \item The decoder corrects any (worst-case) fraction $1-r-\epsilon$ of erasures;
        \item The encoding and decoding time are linear in the block length.
    \end{itemize}
\end{lem}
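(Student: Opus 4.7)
The plan is to adapt the Alon--Edmonds--Luby (AEL) construction of expander codes, which is the standard route to achieving rate-$r$ codes correcting a $(1-r-\epsilon)$-fraction of worst-case erasures in linear time over a constant-size alphabet. The construction proceeds in three layers: (i) an outer code with near-MDS erasure-correction but a large alphabet, (ii) an expander-based alphabet/distance amplification step, and (iii) a concatenation with a small inner code to reduce the alphabet to a constant depending only on $\epsilon$.

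First, I would take an outer Reed--Solomon code $\Cc_{\mathrm{out}}$ of rate $r + \epsilon/4$ and block length $w_{0} = \Theta(w)$ over an alphabet $\Sigma_{0}$ of size linear in $w_{0}$. Being MDS, $\Cc_{\mathrm{out}}$ corrects any $(1-r-\epsilon/4)$-fraction of erasures, and Reed--Solomon erasure decoding runs in $\poly(w_{0})$, which can be reduced to essentially linear time by known fast-polynomial-arithmetic implementations. Next, to amplify the erasure tolerance and set up alphabet reduction, I would fix an explicit constant-degree bipartite expander $G$ with $w_{0}$ vertices on each side, degree $d = d(\epsilon)$, and normalized second eigenvalue at most $\epsilon/8$ (say, obtained from a Ramanujan family). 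Each right vertex $v$ of $G$ is assigned the $d$-tuple formed by its left neighbors' outer symbols, producing a code over $\Sigma_{0}^{d}$. The expander mixing lemma then gives the following: if an adversary erases an $(1-r-\epsilon)$-fraction of right symbols, the fraction of left vertices whose right neighborhood is ``mostly erased'' is bounded by $\epsilon/4$, so at most an $(1-r-\epsilon/4)$-fraction of outer symbols are unrecoverable, and the Reed--Solomon decoder finishes the job.

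The final step is to bring the alphabet down to a constant. I would replace each super-symbol in $\Sigma_{0}^{d}$ by its encoding under a small, explicit inner code $\Cc_{\mathrm{in}}$ over a target alphabet $\Ac$ whose size depends only on $\epsilon$ (existence provided, e.g., by a Singleton-bound-achieving MDS code of constant length, or a Gilbert--Varshamov code found by brute force in constant time); $\Cc_{\mathrm{in}}$ has rate close to $1$ and recovers from a tiny fraction of symbol erasures. Any ``locally heavy'' erasure pattern after this replacement either is correctable by $\Cc_{\mathrm{in}}$ or is flagged as a super-symbol erasure and handed to the outer expander/RS decoding layer. A careful two-parameter count -- the $\epsilon/4$ and $\epsilon/8$ slack kept at each level -- ensures the overall rate stays at least $r$ and the overall erasure tolerance is at least $1-r-\epsilon$.

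The main obstacle is the parameter balancing, not any single step: one must simultaneously ensure that (a) the expansion quality $\lambda/d$ is strong enough that the mixing lemma absorbs the gap between $\epsilon$ and $\epsilon/4$, (b) the inner code's tolerance matches the block-level erasure rate after expander amplification, and (c) the decoder's runtime at each layer stays linear in $w$ despite the need for fast Reed--Solomon decoding on the outer part. Each of these is individually standard once the right constants are fixed, but coordinating them is the delicate part; this is precisely what the AEL construction packages together, so I would invoke their theorem at the end and only verify that the conversion from ``$d$-tuple alphabet $\Sigma_{0}^{d}$'' to ``constant alphabet $\Ac$'' via the inner code preserves both the rate and the linear decoding time.
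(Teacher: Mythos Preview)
The paper does not prove this lemma at all: it is quoted verbatim as \cite[Thm.~1]{Alo96} and used as a black box. There is therefore no ``paper's own proof'' to compare against; the authors simply invoke the Alon--Edmonds--Luby result and later instantiate it with $r=\tfrac{1}{3}$, $\epsilon=\tfrac{1}{6}$.

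Your sketch is a reasonable high-level reconstruction of the AEL construction that the citation points to, and the expander-mixing step is indeed the heart of the argument. One technical point to flag: you take Reed--Solomon as the outer code and then appeal to ``fast-polynomial-arithmetic implementations'' to get linear time. Fast interpolation gives $O(w_0 \log^2 w_0)$ or thereabouts, not strictly $O(w_0)$, so if you want the exact statement of the lemma (encoding and decoding time \emph{linear} in the block length), you either need to start from a base code that is already linear-time decodable (e.g., Spielman-type expander codes, which is closer to what AEL actually do) or argue that near-linear suffices for the downstream application. In this paper the distinction is immaterial, since $w = O(\log n)$ and any $\poly(w)$ decoder would still give the claimed $O(k^2 \cdot \log k \cdot \log n)$ overall runtime; but as a proof of the lemma as stated, the RS-based route leaves a small gap in the linear-time claim.
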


In our analysis, we will not require $\epsilon$ to be arbitrarily small, and instead simply take $r = \frac{1}{3}$ and $\epsilon = \frac{1}{6}$, so that a fraction $\frac{1}{2}$ of erasures is tolerated.  

\subsection{Statement of Main Result} \label{sec:main}

We are now ready to state our main theorem.  For simplicity, we set the relevant parameters to ensure $\pe \le \frac{1}{\log k}$, but with simple modifications to the constant factors (here and in the auxiliary results), we can improve this to $\pe \le \frac{1}{k^c}$ for any fixed $c > 0$.  However, it is worth noting that the decoding time has a $\frac{1}{\pe}$ dependence on $\pe$, which is why we choose a logarithmic dependence on $k$.  We also re-iterate that we have made no effort to optimize constant factors, and we recall that despite the assumption $k \to \infty$ here, the case $k = O(1)$ is in fact much simpler, and is handled in Appendix \ref{sec:fixed_k}.

\begin{thm}
    \label{the:main}
    Under the choices $w = \max\big\{ \frac{3}{\ell} \log_2 n, 70 \log \frac{k}{\delta}  \big\}$ and $\delta = \frac{1}{k \log k}$, and a code $\Cc$ chosen suitably according to Lemma \ref{lem:erasure}, the BMC group testing procedure described in Algorithms \ref{alg:masking} and \ref{alg:data} with an LCS constructed according to Lemma \ref{lem:lcs} yields $\pe \le \frac{1}{\log k}$ for any $k \to \infty$, and the resulting number of tests used satisfies
    \begin{equation}
        t \le 12 k \max\bigg\{ \frac{\ell+1}{\ell} \log_2 n, ~ 50\, (\ell+1) \log k \bigg\} (1+o(1)), \label{eq:t_final}
    \end{equation}
    where $\ell \ge 2$ is a constant (not depending on $n$ or $k$) corresponding to $\log_2|\Ac|$ in Lemma \ref{lem:erasure}.  In addition, with probability at least $1 - \frac{1}{\log k}$, the decoding time is $O(k^2 \cdot \log k \cdot \log n)$.
\end{thm}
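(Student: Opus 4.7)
The plan is to stitch together Lemmas \ref{lem:batch1}, \ref{lem:batch2}, and \ref{lem:lcs} via a union bound, and then read the test count and runtime directly off Algorithms \ref{alg:masking} and \ref{alg:data}. For the error probability I would begin by invoking Lemma \ref{lem:lcs} at the prescribed $(k,w,\delta)$ to obtain a low collision set $\Sc$ of size $2k/\delta$; the hypothesis $w \ge 70\log(k/\delta)$ follows from the second entry of the max in the definition of $w$. Since the columns of $\Xv_1$ are drawn i.i.d.\ uniformly from $\Sc$, the masking strings assigned to the $k' = |\Kc| \le k$ defectives satisfy exactly the sampling hypothesis of Definition \ref{def:lcs}. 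Let $\Ec_0$ denote the event that condition~1 fails, and $\Ec_i$ for $i \in \{1,\dotsc,k'\}$ the event that condition~2 fails at index~$i$; each has probability at most $\delta$ by Lemma \ref{lem:lcs}. On the complement of $\Ec_0 \cup \bigcup_{i=1}^{k'} \Ec_i$, Lemma \ref{lem:batch1} yields $\Lc = \{\svtil_1,\dotsc,\svtil_{k'}\}$, and Lemma \ref{lem:batch2} combined with Lemma \ref{lem:erasure} instantiated at $r = 1/3$, $\epsilon = 1/6$ (so that the code corrects a $1-r-\epsilon = 1/2$ fraction of erasures) gives $\Kchat = \Kc$. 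A union bound then produces $\pe \le (k'+1)\delta \le (k+1)/(k \log k) \le 1/\log k$ for $k$ large enough.

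For the test count I would write $t = t_1 + t_2 = 4kw + 4kw\ell = 4kw(\ell+1)$ and then consider the two branches of the max defining $w$ separately. In the branch $w = (3/\ell)\log_2 n$, we obtain $t = 12k \cdot \frac{\ell+1}{\ell}\log_2 n$, matching the first argument of the max in \eqref{eq:t_final}; this choice simultaneously ensures $|\Ac|^{wr} = 2^{\ell w r} \ge n$ with $r = 1/3$, so the code from Lemma \ref{lem:erasure} has enough codewords to name every item. In the branch $w = 70\log(k/\delta)$, the identity $\log(k/\delta) = 2\log k + \log\log k = (2+o(1))\log k$ yields $4k \cdot 70\log(k/\delta)\cdot(\ell+1) = 560\,k(\ell+1)\log k\,(1+o(1))$, which is bounded above by $12k \cdot 50(\ell+1)\log k \,(1+o(1))$. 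Taking the max of the two branches gives \eqref{eq:t_final}.

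For the decoding time I would argue that Algorithm \ref{alg:masking} iterates over $|\Sc| = 2k/\delta = 2k^2\log k$ masking strings, and each inner product $\sv^T \yv_1$ can be computed in $O(w)$ time by scanning only the weight-$w$ support of $\sv$. This gives a deterministic bound of $O(k^2 \log k \cdot w) = O(k^2 \log k \log n)$. For Algorithm \ref{alg:data}, when $|\Lc| \le k$ the collision check at each of the $4kw$ positions requires $O(k)$ comparisons, producing an $O(k^2 w) = O(k^2 \log n)$ cost, and the linear-time erasure decoder of Lemma \ref{lem:erasure} adds only $O(kw)$ more. The event $|\Lc| \le k$ is itself guaranteed whenever $\Ec_0$ does not occur, which by Lemma \ref{lem:lcs} has probability at least $1 - \delta \ge 1 - 1/\log k$. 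The first batch therefore dominates with the advertised probability.

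The only genuinely non-routine step is lifting the per-index guarantee in condition~2 of Definition \ref{def:lcs} to a simultaneous statement over all $k'$ defectives via a union bound, and threading $k'$ (which is not known to the algorithm) through the argument without losing the $1/\log k$ factor; this is where the choice $\delta = 1/(k\log k)$ rather than a looser $1/\log k$ is essential. Everything else is substitution of the parameter choices into the lemmas above, so once those lemmas are granted the remaining work is mechanical.
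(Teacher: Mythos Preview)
Your proposal is correct and follows essentially the same approach as the paper: invoke Lemma~\ref{lem:lcs} to obtain the LCS, apply a union bound over the defective indices to control $\pe$ via $k\delta$ (your $(k'+1)\delta$ variant is a harmless over-count), feed the result into Lemmas~\ref{lem:batch1}--\ref{lem:batch2} with the rate-$1/3$ code from Lemma~\ref{lem:erasure}, and then substitute $w$ into $t = 4kw(\ell+1)$ branch by branch. Your runtime accounting also matches the paper's (Algorithm~\ref{alg:masking} dominates at $O(|\Sc|\cdot w)$); the phrasing ``at each of the $4kw$ positions requires $O(k)$ comparisons'' is a bit loose, but since only $w$ of those positions are nontrivial and there are $|\Lc|\le k$ outer iterations, the $O(k^2 w)$ bound for Algorithm~\ref{alg:data} is correct.
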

\begin{proof}
    By Lemma \ref{lem:lcs}, there exists an LCS with $\delta = \frac{1}{k \log k}$ as long as $w \ge 70\, \log \frac{k}{\delta}$.  The choice of $w$ in the theorem statement ensures that this condition is true.  Then, since the high-probability event in Definition \ref{def:lcs} holds with probability at least $1-\delta$ for each defective item $i$, it holds simultaneously for all defective items with probability at least $1-k\delta = 1-\frac{1}{\log k}$.  Under this high-probability event, assuming the codebook $\Cc$ in Algorithm \ref{alg:data} corrects $\frac{w}{2}$ worst-case erasures, we deduce from Lemmas \ref{lem:batch1} and \ref{lem:batch2} that the final estimate of the defective set is indeed correct.

    It remains to choose the parameters to ensure that $w \ge 70\, \log \frac{k}{\delta}$, and to characterize the total number of tests and runtime.  Suppose that, as stated following Lemma \ref{lem:erasure}, we use a code of rate $\frac{1}{3}$.  Since identifying an item requires $\frac{1}{\ell} \log_2 n$ symbols from $\Ac$ (with alphabet size $2^{\ell}$), a rate-$\frac{1}{3}$ code yields $w = \frac{3}{\ell} \log_2 n$.\footnote{Here and subsequently, we ignore rounding issues, as these do not impact the final result.}  This is consistent with the condition $w \ge 70\, \log \frac{k}{\delta}$ whenever $k \le \delta  n^{\frac{3}{70 \ell \cdot \log 2}}$.  Alternatively, if this condition on $k$ fails to hold, we can simply replace the rate-$\frac{1}{3}$ code by a (potentially much) lower rate code such that $w = 70 \log \frac{k}{\delta}$; by Lemma \ref{lem:erasure}, such a code still exists with the required erasure-correcting properties.  Combining these two cases, we obtain
    \begin{equation}
        w = \max\bigg\{ \frac{3}{\ell} \log_2 n, 70 \log \frac{k}{\delta}  \bigg\}.
    \end{equation}
    The number of tests is equal to $t_1 = 4kw$ in the first batch, and $t_2 = 4kw\ell$ in the second batch, yielding a total number of tests equal to
    \begin{align}
        t &= 4 kw(\ell + 1) \\
          &= 4 k \max\bigg\{ \frac{3(\ell + 1)}{\ell} \log_2 n, 70 (\ell + 1) \log \frac{k}{\delta}  \bigg\}
    \end{align}
    Substituting $\delta = \frac{1}{k \log k}$, taking a factor of $3$ out the front, and writing $\frac{2 \times 70}{3} \le 50$, we obtain \eqref{eq:t_final}.

    {\bf Decoding time.} For decoding in Algorithm 1, we need to compute an inner product between $\yv_1$ and every $\sv \in \Sc$. To do so, we use the $w$ positions of the ``1'' bits in $\sv$ to index the required entries of $\yv_1$. This leads to $O(w) = O( \log n )$ complexity for each $\sv$, or $O(|\Sc| \cdot \log n) = O(k^2 \cdot \log k \cdot \log n)$ for all $\sv \in \Sc$ (since $|\Sc| = \frac{2k}{\delta}$). The decoding in Algorithm 2 has a total of $|\Lc|$ iterations. In each iteration, it constructs a sequence $\uv$ while incurring $O(w|\Lc|) = O(|\Lc| \log n)$  complexity,\footnote{The loop from $i=1,\dotsc,4kw$ need not be done explicitly; instead, this can be thought of as a loop over $w$ locations of $1$'s.} and then invokes decoding on $\uv$, whose time is linear in the length $w = O(\log n)$ (see Lemma \ref{lem:erasure}).  Hence, the total decoding time for Algorithm 2 is $O( |\Lc|^2 \log n )$.  By Lemma \ref{lem:batch1} and our choice of $\delta$, we know that with probability at least $1-\frac{1}{\log k}$, we have $|\Lc|\le k$.  It is then easy to see that the decoding time is dominated by Algorithm \ref{alg:masking}, and the overall complexity is $O(k^2 \cdot \log k \cdot \log n)$.
\end{proof}

\subsection{Limitations of BMC} \label{sec:limitations}

{The most immediate limitation of BMC is that it has a higher decoding time than certain existing algorithms (notably including GROTEQUE \cite{Cai13} and SAFFRON \cite{Lee15a}) by a factor of $k$.  Hence, it remains an open problem as to whether one can further reduce the decoding time while still maintaining $t = O(k \log n)$. 

Another important limitation is the dependence on the error probability.  We focused on the goal of attaining asymptotically vanishing error probability, and accordingly only targeted $\pe \le \frac{1}{\log k}$ in our main result with $k \to \infty$.  However, in finite-size systems, the speed of convergence to zero can be important, and faster convergence such as $\pe \le k^{-\tau}$ (with $\tau > 0$) is preferable.  While our algorithm and analysis can be adapted to achieve this stricter requirement, the decoding time increases to $k^{2+\tau} \log n$.  In contrast, SAFFRON and GROTESQUE can attain $\pe \le k^{-\tau}$ while only affecting the constant factors in the runtime.

Finally, we re-iterate that the constants factors in Theorem \ref{the:main} are fairly high, since our focus in this paper is on the scaling laws.
}

\section{Proof of Lemma \ref{lem:lcs} (Finding a Low Collision Set)} \label{sec:lcs}

Algorithm \ref{alg:masking} takes as input an LCS, whose properties play a crucial role in proving our main result, Theorem \ref{the:main}.  In this section, we prove the existence of an LCS under suitable parameters, as stated in Lemma \ref{lem:lcs}.  Specifically, we show that if we construct a multi-set in a certain randomized way, then with probability close to $1$, this multi-set will satisfy some {\em sufficient conditions} for being an LCS.  In addition, these sufficient conditions will be verifiable in polynomial time, which is beneficial from a practical point of view.  We emphasize that the LCS is constructed ``offline'' prior to forming the test matrix, and needs to be done only once.

\subsection{A Random Construction} \label{sec:construct}

We will analyze a randomized construction of masking strings (see Definition \ref{def:masking}).  To construct a {\em single} masking string of length $t_1 = 4kw$, for each $4k$-bit segment of the string, we set a uniformly random bit in the segment to be ``1'' and all remaining bits to be ``0''.  To construct a multi-set $\Sc$ containing $|\Sc| = \frac{2k}{\delta}$ random masking strings, we simply repeat this procedure independently $\frac{2k}{\delta}$ times.  This means that $\Sc$ may contain duplicates; however, we will later prove that with high probability, there are no duplicates, so that $\Sc$ is a set.

\subsection{Overview of the Proof}

We will show that with probability approaching one (as $k \to \infty$), the multi-set returned by the above construction is an LCS. Despite the simplicity of the construction, the reasoning is rather complex because there are two sources of randomness involved: The construction is random, while the definition of LCS (Definition~\ref{def:lcs}) also involves its own randomness in the form of random selections from $\Sc$.

To decouple these two forms of randomness, we will introduce the concept of a {\em promising set} (see Section~\ref{sec:promisingset}).  In contrast with LCS, the definition of a promising set does not contain any probability terms.  In addition, we will be able to verify deterministically in polynomial time whether a set is a promising set or not {(see Section \ref{sec:promisingset} for details)}, whereas it is unclear how to check (in polynomial time) whether a set is an LCS.

We will then prove the following: (i) With probability approaching one, the multi-set returned by the random construction in Section~\ref{sec:construct} is a promising set (see Lemma~\ref{lem:promisingexist} below); (ii) A promising set must be an LCS (see Lemma~\ref{lem:promising2low} below) --- namely, being a promising set is a {\em sufficient condition} for being an LCS.  We will prove these claims for $w \ge 70\, \log \frac{k}{\delta}$, $k \to \infty$, $\delta \to 0$, and $\delta \ge \frac{1}{k^2}$, as stated in Lemma \ref{lem:lcs}.  In fact, the latter condition can be improved to  $\delta \ge \frac{1}{k^c}$ for any constant $c > 0$, by suitably adjusting certain other constants.

In addition to the assumption $w \ge 70\, \log\frac{k}{\delta}$, we can further restrict our attention to $w = C\, \log\frac{k}{\delta}$ for some constant $C = \Theta(1)$ with $C \ge 70$.  Once this is established, we can easily get an LCS for larger $w$ values (e.g., $C \to \infty$ as $k \to \infty$) by repeating each masking string; this is formally stated as follows.

\begin{lem}
    \label{lem:extend}
    Given any $(k, w, \delta)$ low collision set $\Sc$ and any positive integer $c$, we can construct a $(k, cw, \delta)$ low collision set $\Sc^c$.
\end{lem}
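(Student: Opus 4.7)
The plan is to construct $\Sc^c$ by simple replication: for each masking string $\sv \in \Sc$, define $\sv^c$ to be the concatenation of $c$ copies of $\sv$, and set $\Sc^c = \{\sv^c : \sv \in \Sc\}$. First I would verify that this construction produces valid $(k,cw)$ masking strings in the sense of Definition~\ref{def:masking}. Since $\sv$ has length $4kw$ and decomposes into $w$ length-$4k$ segments each of Hamming weight $1$, the concatenation $\sv^c$ has length $4kwc$ and decomposes into $cw$ length-$4k$ segments each of Hamming weight $1$, so $\sv^c$ is indeed a $(k,cw)$ masking string. The map $\sv \mapsto \sv^c$ is a bijection between $\Sc$ and $\Sc^c$, so uniform sampling with replacement from $\Sc^c$ is in distributional correspondence with uniform sampling with replacement from $\Sc$.

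Next I would exploit the key arithmetic observation that inner products scale linearly under replication: for any two strings $\sv,\svtil\in\Sc$, the corresponding replicated strings satisfy
\begin{equation}
(\sv^c)^T \svtil^c \;=\; c\, (\sv^T \svtil).
\end{equation}
Now fix $k' \le k$ and $i \in \{1,\ldots,k'\}$, draw $\svtil_1^c,\ldots,\svtil_{k'}^c$ uniformly at random with replacement from $\Sc^c$, and let $\svtil_1,\ldots,\svtil_{k'}$ denote the corresponding pre-images in $\Sc$. For Condition~1 of Definition~\ref{def:lcs} applied to $\Sc^c$, any $\svtil^c \in \Sc^c \setminus \{\svtil_1^c,\ldots,\svtil_{k'}^c\}$ satisfies
\begin{equation}
\sum_{j=1}^{k'} (\svtil^c)^T \svtil_j^c \;=\; c \sum_{j=1}^{k'} \svtil^T \svtil_j,
\end{equation}
so the requirement $\sum_{j=1}^{k'} (\svtil^c)^T \svtil_j^c \le \frac{cw}{2}$ for $\Sc^c$ is equivalent to the requirement $\sum_{j=1}^{k'} \svtil^T \svtil_j \le \frac{w}{2}$ for $\Sc$. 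An identical scaling argument shows that Condition~2 for $\Sc^c$ is equivalent to Condition~2 for $\Sc$.

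Since $\Sc$ is a $(k,w,\delta)$ low collision set, both conditions hold jointly for the pre-image draw with probability at least $1-\delta$, and hence the corresponding conditions hold for the replicated draw in $\Sc^c$ with probability at least $1-\delta$. This establishes that $\Sc^c$ is a $(k,cw,\delta)$ low collision set. There is no substantive obstacle here; the only thing to check carefully is that the thresholds in Definition~\ref{def:lcs} scale with $w$ in the same way as the inner products scale under replication, which they do (both by a factor of $c$), so the reduction is exact.
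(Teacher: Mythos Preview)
Your proposal is correct and follows essentially the same approach as the paper: define $\Sc^c$ by concatenating $c$ copies of each masking string, use the identity $(\sv^c)^T\svtil^c = c\,(\sv^T\svtil)$, and observe that both LCS conditions scale by exactly the same factor $c$ as the threshold $\frac{w}{2}\mapsto\frac{cw}{2}$. Your write-up is in fact slightly more explicit than the paper's in verifying that $\sv^c$ is a valid $(k,cw)$ masking string and in spelling out the sampling bijection, but the substance is identical.
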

\begin{proof}
    For compactness, throughout this proof we use the terminology that a $(k,w)$ masking string $\svtil$ is {\em $w$-compatible} with a multi-set $\{\sv_1,\dotsc,\sv_m\}$ if $\sum_{i=1}^m \svtil^T \sv_i \le \frac{w}{2}$.

    Let $\Sc^c = \{\sv^c \,\,|\,\, \sv\in \Sc \}$, where $\sv^c$ denotes the concatenation of $c$ copies of $\sv$.  For all $\sv$ and $\svtil$, we trivially have $(\sv^c)^T \svtil^c = c\times (\sv^T \svtil )$.  Fix an integer $m$ and a multi-set $\{\svtil^c_1, \ldots, \svtil^c_m\} \subseteq \Sc^c$.  It is easy to verify that: (i) $\Sctil^c = \{\svtil_1^c, \ldots, \svtil_m^c\}$ is $(cw)$-compatible with all $\sv^c \in \Sc^c\setminus \Sctil^c$ if and only if $\Sctil = \{\svtil_1, \ldots, \svtil_m\}$ is $w$-compatible with all $\sv \in \Sc\setminus \Sctil$, and (ii) for all $i=1,\dotsc,m$, $(\Sctil^{(-i)})^c = \{\svtil_1^c, \ldots, \svtil_{i-1}^c, \svtil_{i+1}^c, \ldots, \svtil_m^c\}$ is $(cw)$-compatible with $\svtil_i^c$ if and only if $\Sctil^{(-i)} = \{\svtil_1, \ldots, \svtil_{i-1}, \svtil_{i+1}, \ldots, \svtil_m\}$ is $w$-compatible with $\svtil_i$.  From Definition \ref{def:lcs}, we deduce that since $\Sc$ is a $(k,w,\delta)$ LCS, $\Sc^c$ is a $(k,cw,\delta)$ LCS.
\end{proof}

Hence, we proceed by assuming that $w = C\, \log\frac{k}{\delta}$ with $C = \Theta(1)$ and $C \ge 70$.  In particular, we will use the fact that $\frac{w}{k} \to 0$, obtained by combining this assumption with $k \to \infty$ and $\delta \ge \frac{1}{k^2}$.

\subsection{The Concept of a Promising Set} \label{sec:promisingset}

Given a set $\Sc$ of masking strings and any $\svtil\in \Sc$, we define
\begin{equation}
    \mu(\svtil, \Sc) = \frac{1}{|\Sc|-1} \sum_{\sv\in \Sc \setminus \{\svtil\}} \svtil^T \sv. \label{eq:def_mu}
\end{equation}
In the following, we define the concept of a {\em promising set}, which will provide a stepping stone to establishing the existence of an LCS.

\begin{defn} \label{def:promising}
    A set $\Sc$ of $(k,w)$ masking strings is a {\em $(k,w,\delta)$ promising set} if the following equations hold for all $\svtil \in \Sc$:
    \begin{align}
        \label{eqn:proof1}
        \Big|\mu(\svtil,\Sc) - \frac{w}{4k}\Big| &\le \frac{0.04w}{4k} \\
        \label{eqn:proof2}
        \max_{\sv \in \Sc \setminus \{\svtil\}} |\svtil^T\sv - \mu(\svtil,\Sc)| &\le 6.1 \\
        \label{eqn:proof3}
        \sum_{\sv \in \Sc \setminus \{\svtil\}} (\svtil^T\sv - \mu(\svtil,\Sc))^2
        &\le (|\Sc|-1) \frac{w}{2k}.
    \end{align}
\end{defn}

To gain some intuition behind this definition, note that $\mu(\svtil,\Sc)$ is the average number of collisions between $\svtil$ and other masking strings in $\Sc$. Hence, \eqref{eqn:proof1} requires the average to be close to $\frac{w}{4k}$.  Similarly, \eqref{eqn:proof2} requires the maximum number of collisions to be close to this average, and \eqref{eqn:proof3} bounds the ``variance'' of the number of collisions between $\svtil$ and other masking strings in $\Sc$. The values on the right-hand side of the three equations are carefully chosen
such that (i) the random construction in Section \ref{sec:construct} returns a promising set with high probability, and (ii) a promising set must be an LCS.

{As we stated previously, the conditions in Definition \ref{def:promising} can be verified in a computationally efficient manner.  Computing the inner product between two masking strings can be done in $O(w)$ time, since each has only $w$ non-zero entries.  Then, each mean value $\mu(\svtil, \Sc)$ (for $\sv \in \Sc$) can be computed in time $O(|\Sc| \cdot w)$, for a total of $O(|\Sc|^2 \cdot w)$ time.  Finally, once all such values have been computed, conditions \eqref{eqn:proof1}--\eqref{eqn:proof3} can similarly be directly checked for all $\svtil \in \Sc$ in time $O(|\Sc|^2 \cdot w)$.}


\subsection{Probability of Being a Promising Set} \label{sec:claim1}

The following lemma proves that the random construction in Section~\ref{sec:construct} yields a promising set with high probability.

\begin{lem}
    \label{lem:promisingexist}
    Consider any sequence of triplets $(k,w,\delta)$ such that $k \to \infty$, $\delta \to 0$, $\delta \ge \frac{1}{k^2}$, and $w = C \log\frac{k}{\delta}$ with $C = \Theta(1)$ and $C \ge 70$.  For sufficiently large $k$, with probability\footnote{The probability is with respect to the randomness in the construction in Section \ref{sec:construct}.} approaching one as $k \to \infty$ the multi-set $\Sc$ is a $(k, w, \delta)$ promising set of size $\frac{2k}{\delta}$.
\end{lem}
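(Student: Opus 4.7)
The plan is to fix an arbitrary $\svtil \in \Sc$ and analyze each of the three conditions in Definition \ref{def:promising} conditionally on this choice, then apply a union bound over $\svtil$. Since the $|\Sc| = \frac{2k}{\delta}$ masking strings are generated independently, conditioning on $\svtil$ leaves the remaining $|\Sc|-1$ strings i.i.d. Using the block structure of $\svtil$ (a single $1$ at a fixed position in each of the $w$ blocks of size $4k$), the inner product $\svtil^T \sv$ for any independently drawn $\sv$ decomposes as a sum of $w$ independent $\Bernoulli\big(\frac{1}{4k}\big)$ indicators, one per block. Hence the random variables $X_j \defeq \svtil^T \sv_j$ for $\sv_j \in \Sc \setminus \{\svtil\}$ are i.i.d.\ $\Binomial\big(w, \frac{1}{4k}\big)$ with mean $\frac{w}{4k}$ and variance at most $\frac{w}{4k}$. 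The final union bound multiplies each per-$\svtil$ failure probability by $|\Sc| = O(k^3)$ (using $\delta \ge \frac{1}{k^2}$), which is easily absorbed by the sub-polynomial bounds obtained below.

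For condition \eqref{eqn:proof1}, I would observe that $\sum_j X_j$ is itself $\Binomial\big((|\Sc|-1)w, \frac{1}{4k}\big)$ with mean $\Theta\big(\frac{w}{\delta}\big)$, and the desired event is exactly a multiplicative deviation of $0.04$ from this mean. A standard Chernoff bound yields failure probability $\exp\!\big(-\Theta(w/\delta)\big)$, and since $w \ge 70 \log\frac{k}{\delta}$ and $\frac{1}{\delta} \ge 1$, this is much smaller than any polynomial in $\frac{1}{k}$.

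Condition \eqref{eqn:proof2} is the main obstacle and is where the specific constants ($6.1$ and $70$) are pinned down. Unlike the other two conditions, it mandates a constant-gap (rather than proportional-gap) bound even though $\mu(\svtil, \Sc) = \Theta\big(\frac{w}{k}\big) \to 0$, so multiplicative Chernoff does not suffice and one must work directly with the raw Binomial tail at a fixed value. I would first use condition \eqref{eqn:proof1} to reduce the event to the one-sided tail $X_j \le 7$ for all pairs (valid for large $k$ since $\mu(\svtil, \Sc) \to 0$). The estimate $\PP(X_j \ge 7) \le \binom{w}{7}\big(\frac{1}{4k}\big)^7 = O\!\big(\frac{w^7}{k^7}\big)$, followed by a union bound over the at most $|\Sc|^2 = O\big(\frac{k^2}{\delta^2}\big)$ ordered pairs in $\Sc$, yields total failure probability of order $\frac{w^7}{k^5 \delta^2}$, which tends to zero under $\delta \ge \frac{1}{k^2}$ and $w = O\big(\log\frac{k}{\delta}\big)$. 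The threshold value $7$ (yielding the ``$6.1$'' in Definition \ref{def:promising}) and the constant $70$ in the assumption $w \ge 70 \log\frac{k}{\delta}$ are precisely what makes this union-bound budget close.

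For condition \eqref{eqn:proof3}, I would use the identity $\sum_j (X_j - \bar X)^2 \le \sum_j (X_j - \mu)^2$ with $\mu = \frac{w}{4k}$ the true common mean, reducing the task to concentrating a sum of i.i.d.\ nonnegative bounded random variables $(X_j - \mu)^2$ whose expectation is $\var(X_j) \le \frac{w}{4k}$. The target $(|\Sc|-1)\cdot\frac{w}{2k}$ is twice this expectation, and on the (already-established, high-probability) event from condition \eqref{eqn:proof2} that every $X_j \le 7$, each summand is bounded by a constant, so Hoeffding's inequality yields the required factor-$2$ concentration. Finally, a brief separate calculation using $\PP[\sv = \sv'] = (4k)^{-w}$ and a union bound over pairs shows that no two strings in $\Sc$ coincide with probability tending to $1$, so that the multi-set $\Sc$ is in fact a set of size $\frac{2k}{\delta}$, as required by Definition \ref{def:promising}.
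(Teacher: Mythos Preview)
Your overall strategy matches the paper's closely: fix $\svtil$, use that the remaining $X_j=\svtil^T\sv_j$ are i.i.d.\ $\Binomial(w,\tfrac{1}{4k})$, treat the three promising-set conditions separately, and union-bound over $|\Sc|$. Conditions \eqref{eqn:proof1} and the no-duplicates check are handled essentially as in the paper.

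There is a genuine gap in your treatment of condition \eqref{eqn:proof3}. You correctly reduce (via the variance-minimizing property of the empirical mean) to concentrating $\sum_j (X_j-\tfrac{w}{4k})^2$, and you correctly observe that on the event from condition \eqref{eqn:proof2} each summand is bounded by a constant $M$. But additive Hoeffding is \emph{not} strong enough here. The summands have mean $\Theta(\tfrac{w}{4k})\to 0$, so the deviation you need to rule out is $t=(|\Sc|-1)\tfrac{w}{4k}$; Hoeffding's exponent is $\Theta\big(t^2/((|\Sc|-1)M^2)\big)=\Theta\big((|\Sc|-1)(\tfrac{w}{4k})^2\big)=\Theta\big(\tfrac{w^2}{k\delta}\big)$. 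Under the lemma's hypotheses this quantity is \emph{not} forced to diverge: for instance with $\delta=k^{-1/2}$ (which satisfies $\delta\ge k^{-2}$ and $\delta\to 0$) and $w=\Theta(\log k)$ one gets $\tfrac{w^2}{k\delta}=\Theta\big(\tfrac{(\log k)^2}{\sqrt{k}}\big)\to 0$, so the Hoeffding bound is vacuous and cannot survive the union bound over $|\Sc|$. The paper avoids this by normalizing the summands to $[0,1]$ and applying the \emph{multiplicative} Chernoff bound, whose exponent is proportional to the mean of the sum, $\Theta\big((|\Sc|-1)\tfrac{w}{4k}\big)=\Theta\big(\tfrac{w}{\delta}\big)\to\infty$; this is where the $w\ge 70\log\tfrac{k}{\delta}$ assumption is actually used in this lemma (not in condition \eqref{eqn:proof2} as you suggest). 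Bernstein would also work, since it too yields an exponent of order $\Theta(w/\delta)$.

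Two smaller points on condition \eqref{eqn:proof2}: first, the target event should be $X_j\le 6$ (so that $|X_j-\mu(\svtil,\Sc)|\le 6+o(1)\le 6.1$), with bad event $\{X_j\ge 7\}$; your probability computation $\PP(X_j\ge 7)\le\binom{w}{7}(4k)^{-7}$ is the right complement, but the phrase ``$X_j\le 7$'' is off by one. Second, for this condition the union bound over $(2k/\delta)^2\le 4k^6$ pairs closes against the $O(w^7/k^7)$ tail simply because $w=O(\log k)$; the specific constant $70$ plays no role here.
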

\begin{proof}
    Let $\Sc = \{\sv_1, \sv_2, \ldots, \sv_{\frac{2k}{\delta}}\}$ be the multi-set constructed in Section~\ref{sec:construct}. With a slight abuse of notation, for any $i$, we define
    $\mu(\sv_i, \Sc) = \frac{1}{|\Sc|-1}\sum_{j\,:\,j\ne i} \sv_i^T \sv_j$. We will prove that, with probability approaching one, the following conditions hold simultaneously for all $i$:
    \begin{align}
        \label{eqn:proof4}
        \Big|\mu(\sv_i, \Sc) - \frac{w}{4k}\Big| &\le \frac{0.04w}{4k} \\
        \label{eqn:substitute}
        \max_{j \,:\, j\ne i} \Big|\sv_i^T \sv_j - \frac{w}{4k}\Big| &\le 6.05 \\
        \label{eqn:substitute2}
        \sum_{j\,:\,j\ne i}\Big(\sv_i^T \sv_j-\frac{w}{4k}\Big)^2
        &\le (|\Sc|-1) \frac{w}{2k}.
    \end{align}
    Note that \eqref{eqn:substitute} implies that $\Sc$ is a set (i.e., there are no duplicates): If there existed $i$ and $j$ such that $i\ne j$ and $\sv_i = \sv_j$, then we would have $\max_{j \,:\,  j\ne i} |\sv_i^T \sv_j - \frac{w}{4k}| = w - \frac{w}{4k} = w(1+o(1))$, violating \eqref{eqn:substitute}.
    Given that $\Sc$ is a set, \eqref{eqn:proof4} becomes equivalent to \eqref{eqn:proof1}.
    Then, combining \eqref{eqn:proof4} and \eqref{eqn:substitute} leads to \eqref{eqn:proof2}, since
    \begin{align}
        \max_{\sv\in \Sc \setminus \{\svtil\}} |\svtil^T\sv - \mu(\svtil,\Sc)| 
            &= \max_{j\,:\,j\ne i} |\sv_i^T \sv_j - \mu(\sv_i,\Sc)| \\
            &\le \Big|\mu(\sv_i, \Sc) - \frac{w}{4k}\Big| + \max_{j\,:\,j\ne i} \Big|\sv_i^T \sv_j - \frac{w}{4k}\Big| \\
            & \le \frac{0.04w}{4k} + 6.05 \\
            &\le 6.1, \label{eq:final6}
    \end{align}
    where \eqref{eq:final6} holds for sufficiently large $k$ since $\frac{w}{k} \to 0$.

    Finally, using the standard fact that an expectation $\EE[(Z - a)^2]$ is always smallest when $a = \EE[Z]$, and noting that $\mu(\sv_i,\Sc)$ is the average of the $\sv_i^T \sv_j$ values for $j \ne i$, we deduce that $\sum_{j\,:\,j\ne i}(\sv_i^T \sv_j -\mu(\sv_i, \Sc))^2 \le \sum_{j\,:\,j\ne i}(\sv_i^T \sv_j-a)^2$ for any $a \in \RR$. Taking $a = \frac{w}{4k}$, we find that \eqref{eqn:substitute2} implies \eqref{eqn:proof3}. 
    
    To complete the proof, we show that \eqref{eqn:proof4}, \eqref{eqn:substitute}, and \eqref{eqn:substitute2} each hold (simultaneously for all $i$) with probability approaching one as $k \to \infty$.  A trivial union bound then shows that the three hold simultaneously with probability approaching one.
    
    For \eqref{eqn:proof4},
    consider any fixed $i$ and fixed $\sv_i$, and view the remaining masking strings in $\Sc$ as random variables (according to the randomness in the construction). The quantity $\sum_{j\,:\,j\ne i} \sv_i^T \sv_j$ follows a binomial distribution with parameters $(|\Sc|-1)w$ and $\frac{1}{4k}$. By the Chernoff bound (see Appendix \ref{sec:conc}), we have
    \begin{align}
        \PP\bigg[\Big|\mu(\sv_i, \Sc) - \frac{w}{4k}\Big| \ge \frac{0.04w}{4k}\bigg] 
            &= \PP\bigg[\Big|\sum_{j\,:\,j\ne i} \sv_i^T \sv_j - (|\Sc|-1)\frac{w}{4k}\Big| \ge (|\Sc|-1)\frac{0.04w}{4k}\bigg] \label{eq:cond1_1} \\
            &\le 2 \exp\bigg(-\frac{1}{3} \cdot (0.04^2) \cdot (|\Sc|-1)\frac{w}{4k}\bigg)   \\
            &\le 2 \exp\bigg(- \frac{70\cdot(0.04^2)}{12 \delta} \log\frac{k}{\delta}\bigg)  \label{eq:cond1_2} \\
            &= \Big(\frac{\delta}{k}\Big)^{\omega(1)},  \label{eq:cond1_3}
    \end{align}
    where \eqref{eq:cond1_2} uses $|\Sc|-1 = \frac{2k}{\delta} - 1 \ge \frac{k}{\delta}$ and $w \ge 70 \log \frac{k}{\delta}$, and \eqref{eq:cond1_3} uses $\delta \to 0$.  By a union bound across all $\frac{2k}{\delta}$ values of $i$, we deduce that \eqref{eqn:proof4} holds with probability approaching one.

    For \eqref{eqn:substitute}, first observe that trivially $\sv_i^T \sv_j \ge \frac{w}{4k} - 6.05$ for sufficiently large $k$, since $\frac{w}{k} \to 0$ and $\sv_i^T \sv_j \ge 0$.  To establish the other direction  $\sv_i^T \sv_j \le \frac{w}{4k} + 6.05$, consider any fixed $i$ and fixed $\sv_i$, and view $\sv_j$ as a random variable.
    The quantity $\sv_i^T \sv_j$ follows a binomial distribution with parameters $w$ and $\frac{1}{4k}$, so its mean is $\frac{w}{4k}$.  By the Chernoff bound (see Appendix \ref{sec:conc}), we have for any $\eta > 0$ that
    \begin{align}
        \PP\Big[\sv_i^T \sv_j \ge \frac{w}{4k} (1+\eta)\Big]
            &\le \exp\bigg( - \frac{w}{4k} \big( (1+\eta) \log(1+\eta) - \eta \big) \bigg). \label{eqn:proof7}
    \end{align} 
    Recalling that $w = C\log\frac{k}{\delta}$ with $C = \Theta(1)$, we set $\eta = \frac{24.2}{C}\cdot \frac{k}{\log \frac{k}{\delta}}$, so that the event in the probability \eqref{eqn:proof7} is indeed the complement of the event $\sv_i^T \sv_j < \frac{w}{4k} + 6.05$.  This choice satisfies $\eta \to \infty$, and hence $ (1+\eta) \log(1+\eta) - \eta = (\eta \log \eta)(1+o(1))$.  Also noting that $\log \eta = (\log k)(1+o(1))$, we find that \eqref{eqn:proof7} simplifies to
    \begin{align}
        \PP\Big[\sv_i^T \sv_j \ge \frac{w}{4k} (1+\eta)\Big]
            &\le \exp\bigg( - \frac{C \log \frac{k}{\delta}}{4k} \cdot \frac{24.2}{C}\cdot \frac{k}{\log \frac{k}{\delta}} \cdot (\log k)(1+o(1)) \bigg) \\
            &= \exp\Big( - \big( 6.05 \log k \big) (1+o(1)) \Big) \\
            &= k^{-6.05(1+o(1))}. \label{eq:cond2_end}
    \end{align}
    There are a total of ${|\Sc| \choose 2} \le (\frac{2k}{\delta})^2$ possible combinations of $i$ and $j$, which we can further upper bound by $4k^6$ since $\delta \ge \frac{1}{k^2}$.  Taking a union bound over all such combinations, we deduce that \eqref{eqn:substitute} holds for all $i$ with probability approaching one.
    
    Finally, for \eqref{eqn:substitute2}, consider any fixed $i$ and $\sv_i$, and view the remaining masking strings in $\Sc$ as random variables. Under the given $i$ and $\sv_i$, define the random variable $Z_j=
    \frac{(\sv_i^T \sv_j-\frac{w}{4k})^2}{(6.05^2)}$ for $j\ne i$. The quantity $\sv_i^T \sv_j$ is a binomial random variable with parameters $w$ and $\frac{1}{4k}$, and hence
    \begin{align}
        \EE[Z_j] &= \frac{\EE[(\sv_i^T \sv_j-\frac{w}{4k})^2]}{(6.05^2)}
            = \frac{\mbox{Var}[\sv_i^T \sv_j]}{(6.05^2)} 
            = \frac{w\cdot \frac{1}{4k} \cdot (1-\frac{1}{4k})}{(6.05^2)},
    \end{align}
    which implies
    \begin{equation}
        \EE\Big[\sum_{j\,:\,j\ne i}Z_j\Big] \le \frac{w(1-\frac{1}{4k})}{4k} \cdot \frac{|\Sc|-1}{(6.05^2)}. \label{eq:mean_sum_Z}
    \end{equation}
    By \eqref{eq:cond2_end} and the union bound, we know that with probability at least $1-k^{-3.05(1+o(1))}$, it holds that $|\sv_i^T \sv_j -\frac{w}{4k}|< 6.05$ for all $j \ne i$, and hence $Z_j \le 1$.  It will be useful to condition on the corresponding event $\Bc = \bigcap_{j\,:\,j \ne i} \{Z_j \le 1\}$.  Since $\{Z_j\}$ are independent random variables, they remain independent after this conditioning.  {In addition, \eqref{eq:mean_sum_Z} implies that
    \begin{equation}
        \EE\Big[\sum_{j\,:\,j\ne i}Z_j \,\Big|\, \Bc\Big] \le \frac{w(1-\frac{1}{4k})}{4k} \cdot \frac{|\Sc|-1}{(6.05^2)}, \label{eq:mean_sum_Z2}
    \end{equation}
    since the conditioning on $\Bc$ does not increase the average (we are conditioning on each $Z_j$ taking smaller values compared to its full range).}

    Conditioned on $\Bc$, we invoke the Chernoff bound (see Appendix \ref{sec:conc}) and get 
    \begin{align}
        &\PP\Big[\sum_{j\,:\,j\ne i}\Big(\sv_i^T \sv_j-\frac{w}{4k}\Big)^2 \ge (|\Sc|-1) \frac{w}{2k} \,\Big|\, \Bc \Big] \\
            &\qquad= \PP\Big[\sum_{j\,:\,j\ne i} Z_j \ge \frac{|\Sc|-1}{(6.05^2)} \frac{w}{2k}  \,\Big|\, \Bc \Big] \\
            &\qquad\le \exp\Big(-\frac{1}{3} \frac{|\Sc|-1}{(6.05^2)} \frac{w(1-\frac{1}{4k})}{4k}\Big)  \label{eq:cond3_6} \\
            &\qquad\le \exp\Big(-\frac{70}{3 (6.05^2)} \cdot \frac{k(1-\frac{1}{4k})\log \frac{k}{\delta}}{4k\delta} \Big)  \label{eq:cond3_7} \\
            &\qquad= \Big(\frac{\delta}{k}\Big)^{\omega(1)}, \label{eq:cond3_8}
    \end{align}
    where the application of the Chernoff bound in \eqref{eq:cond3_6} also uses \eqref{eq:mean_sum_Z2}, \eqref{eq:cond3_7} uses $|\Sc| -1 = \frac{2k}{\delta} - 1 \ge \frac{k}{\delta}$ and $w \ge 70 \log\frac{k}{\delta}$, and \eqref{eq:cond3_8} uses $\delta = o(1)$.  Using $\PP[\Bc] \ge 1-k^{-3.05(1+o(1))}$ and taking a union bound across all values of $i$, we deduce that \eqref{eqn:substitute2} holds with probability approaching one.
\end{proof}

\subsection{A Promising Set Must Be an LCS} \label{sec:claim2}

The following lemma establishes that any promising set is an LCS.

\begin{lem}
    \label{lem:promising2low}
    Consider any sequence of triplets $(k,w,\delta)$ such that $k \to \infty$, $\delta \to 0$, $\delta \ge \frac{1}{k^2}$, and $w \ge 70 \log\frac{k}{\delta}$.  For sufficiently large $k$, a $(k, w, \delta)$  promising set $\Sc$ of size $\frac{2k}{\delta}$ must be a $(k, w, \delta)$ LCS.
\end{lem}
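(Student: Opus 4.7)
The plan is to establish both conditions of Definition~\ref{def:lcs} simultaneously, splitting the failure budget $\delta$ roughly into halves, and applying Bernstein's inequality (or a comparable Bennett-style bound) to sums of bounded-range i.i.d.\ inner products. The three promising-set inequalities supply exactly the ingredients needed: \eqref{eqn:proof1} controls the mean of each inner product, \eqref{eqn:proof2} gives the per-sample range around that mean, and \eqref{eqn:proof3} bounds the empirical second moment, which in turn controls the true variance of a uniform draw from $\Sc$.

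For condition~1, I would fix any $\svtil \in \Sc$ and bound $\PP[\svtil \notin \Sctil,\ \sum_{j=1}^{k'} \svtil^T \svtil_j > w/2]$ by the corresponding conditional probability given $\{\svtil \notin \Sctil\}$. By the independence of the $\svtil_j$'s, this conditioning turns each $\svtil_j$ into a uniform draw from $\Sc \setminus \{\svtil\}$, so the summands $Y_j = \svtil^T \svtil_j$ become i.i.d.\ with mean $\mu(\svtil,\Sc) \le 1.04\,w/(4k)$ by \eqref{eqn:proof1}, with $|Y_j - \mu(\svtil,\Sc)| \le 6.1$ by \eqref{eqn:proof2}, and with per-sample variance at most $w/(2k)$ by \eqref{eqn:proof3}. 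Since $k' \le k$, the expectation of the sum is at most $0.26w$, leaving a gap of $0.24w$ below the target $w/2$, and Bernstein's inequality contributes a factor of the form $\exp(-\Theta(w))$. With $w \ge 70\log(k/\delta)$ this beats the union-bound cost of $|\Sc| = 2k/\delta$ with room to spare, yielding total failure probability $o(\delta)$ for condition~1.

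For condition~2, I would fix the prescribed index $i$ and condition on the realization of $\svtil_i$, leaving the remaining $\svtil_j$ (for $j \ne i$) i.i.d.\ uniform on all of $\Sc$. The event that some $\svtil_j$ equals $\svtil_i$ is fatal on its own, since then $\svtil_i^T \svtil_j = w > w/2$, and this must be absorbed directly into the error budget; its probability is at most $(k'-1)/|\Sc| \le (k-1)\delta/(2k) < \delta/2$, accounting for the dominant contribution. On the complementary no-coincidence event, the summands become i.i.d.\ uniform on $\Sc \setminus \{\svtil_i\}$, and the same Bernstein computation as above---using the promising-set constants attached to $\svtil_i$---yields an additional $o(\delta)$.

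Combining via a union bound, the total failure probability is at most $\delta/2 + o(\delta) < \delta$ for all sufficiently large $k$, which is exactly the LCS requirement. The main obstacle I anticipate is the careful bookkeeping of the numerical constants in the Bernstein exponent: the $1.04$ slack in \eqref{eqn:proof1} must leave a constant-fraction gap between the expected sum and $w/2$, and the $6.1$ range bound and $w/(2k)$ variance bound must combine through the Bernstein denominator $\sigma^2 + Mt/3$ to yield an exponent proportional to $w$ with coefficient large enough that substituting $w = 70\log(k/\delta)$ produces $(\delta/k)^c$ with $c$ strictly greater than one, so that the $|\Sc| = 2k/\delta$ union bound still gives $o(\delta)$. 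Once those constants are verified to line up, the remaining steps are a disciplined application of conditional concentration and do not require any new ideas.
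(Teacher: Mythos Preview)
Your proposal is correct and follows essentially the same route as the paper: for each condition you fix the distinguished string, condition so that the remaining draws are i.i.d.\ uniform on $\Sc\setminus\{\svtil\}$, and apply Bernstein's inequality using exactly the mean bound \eqref{eqn:proof1}, range bound \eqref{eqn:proof2}, and variance bound \eqref{eqn:proof3}, then union-bound over $|\Sc|$ choices of $\svtil$. The paper dresses the condition~1 union bound as a ``bad-entry'' counting argument in an $|\Sc|\times|\Sc|^{k'}$ matrix, but this is just your $\PP[\svtil\notin\Sctil,\ \sum\svtil^T\svtil_j>w/2]\le\PP[\sum>w/2\mid\svtil\notin\Sctil]$ step rewritten combinatorially; the Bernstein computation and the resulting $(\delta/k)^2$ tail are identical, and the overall budget split ($o(\delta)$ for condition~1, $\delta/2+o(\delta)$ for condition~2) matches the paper's $\delta/4$ and $0.6\delta$ up to inessential constants.
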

\begin{proof}
    In accordance with Definition \ref{def:lcs}, fix $k' \le k$, and select $k'$ strings $\svtil_1,\dotsc,\svtil_{k'}$ from $\Sc$ uniformly at random with replacement to form the multi-set $\Sctil = \{\svtil_1, \ldots, \svtil_{k'}\}$.  Note that here $\Sc$ is already fixed --- only $\svtil_1,\dotsc,\svtil_{k'}$ are random variables.
    
    We first prove that $\Sc$ satisfies the first requirement of LCS. Specifically, we will show that with probability at least $1-\frac{\delta}{4}$, we have $\sum_{i=1}^{k'} \svtil^T \svtil_i \le \frac{w}{2}$ for all $\svtil\in \Sc \setminus \Sctil$. We consider a binary matrix whose $|\Sc|^{k'}$ columns correspond to all the possible $\Sctil$, and whose $|\Sc|$ rows correspond to all the possible $\svtil\in \Sc$. We say that a matrix entry corresponding to a given $\Sctil$ and $\svtil$ is {\em bad} if $\sum_{i=1}^{k'} \svtil^T \svtil_i > \frac{w}{2}$ and $\svtil\in \Sc \setminus \Sctil$. To prove the desired claim, it suffices to show that at least $|\Sc|^{k'}\times (1-\frac{\delta}{4})$ columns contain no bad entries. Directly proving this appears to be challenging, so we instead prove that for each row, at most a $\frac{\delta}{4|\Sc|}$ fraction of the entries are bad. This will then imply that the total number of bad entries in the matrix is at most $|\Sc|^{k'} \times |\Sc| \times \frac{\delta}{4|\Sc|} = |\Sc|^{k'} \times \frac{\delta}{4}$, and hence there can be at most $|\Sc|^{k'} \times \frac{\delta}{4}$ columns containing bad entries.
    
    To prove that each row has at most $\frac{\delta}{4 |\Sc|}$ fraction of its entries being bad, it suffices to prove that for any given $\svtil$, when we choose $\svtil_1$ through $\svtil_{k'}$ from $\Sc\setminus \{\svtil\}$ uniformly at random with replacement, we have
    \begin{align}
        \label{eqn:proof8}
        \PP\Big[\sum_{i=1}^{k'} \svtil^T \svtil_i \ge \frac{w}{2}\Big] &\le \frac{\delta}{4|\Sc|}.
    \end{align}
    To prove \eqref{eqn:proof8}, define $Z_i= \svtil^T \svtil_i-\mu(\svtil,\Sc)$ for $i=1,\dotsc,k'$, where $\mu(\svtil,\Sc)$ is defined in \eqref{eq:def_mu}.  Hence, we have $\EE[Z_i]=0$. (Note, however, that $\svtil^T \svtil_i$ does {\em not} follow a binomial distribution.)
    Since $\Sc$ is a promising set, \eqref{eqn:proof1} yields
    \begin{equation}
        \sum_{i=1}^{k'}\svtil^T \svtil_i = \sum_{i=1}^{k'}(Z_i+\mu(\svtil,\Sc)) = k'\cdot \mu(\svtil,\Sc) + \sum_{i=1}^{k'} Z_i \le \frac{1.04w}{4} + \sum_{i=1}^{k'} Z_i. \label{eq:sum_mean}
    \end{equation}
    In addition, for all $i=1,\dotsc,k'$, \eqref{eqn:proof2} and \eqref{eqn:proof3} tell us that
    $|Z_i| \le 6.1$ and $\EE[Z_i^2] \le \frac{w}{2k}$.
    Hence, by Bernstein's inequality (see Appendix \ref{sec:conc}), we have
    \begin{align}
        \PP\Big[\sum_{i=1}^{k'} Z_i>\frac{0.96w}{4}\Big]
            &\le \exp\bigg(-\frac{\frac{(0.96w)^2}{16}}{2k' \cdot \frac{w}{2k} + \frac{2}{3} \cdot 6.1 \cdot \frac{0.96w}{4}}\bigg) \label{eq:pro1_2} \\
            &\le \exp\Big(-\frac{\frac{0.96^2}{16}w}{1 + \frac{2}{3} \cdot 6.1 \cdot \frac{0.96}{4}}\Big) \label{eq:pro1_3}  \\
            &\le \exp\Big(-\frac{w}{35} \Big) \label{eq:pro1_4}  \\
            &\le \Big(\frac{\delta}{k}\Big)^{2}  \label{eq:pro1_5} \\
            &\le \frac{\delta}{4|\Sc|}, \label{eq:pro1_6} 
    \end{align}
    where \eqref{eq:pro1_3} uses $k' \le k$, \eqref{eq:pro1_4} uses a numerical calculation, \eqref{eq:pro1_5} uses $w \ge 70 \log \frac{k}{\delta}$, and \eqref{eq:pro1_6} holds since $|\Sc| = \frac{2k}{\delta}$ and $k \to \infty$.  In turn, for any given $\svtil\in \Sc$, \eqref{eqn:proof8} follows since
    \begin{equation}
        \PP\bigg[\sum_{i=1}^{k'} \svtil^T \svtil_i \ge \frac{w}{2}\bigg] \le \PP\bigg[\sum_{i=1}^{k'} Z_i>\frac{0.96w}{4}\bigg] \le \frac{\delta}{4|\Sc|},
    \end{equation}
    where the first inequality uses \eqref{eq:sum_mean}.
    
    Next, we prove that $\Sc$ satisfies the second requirement for an LCS. Specifically, we show that for any given $i \in \{1,\dotsc,k'\}$, with probability at least $1-0.6\, \delta$, the multi-set $\{\svtil_1, \ldots, \svtil_{i-1}, \svtil_{i+1}, \ldots, \svtil_{k'}\}$ and $\svtil_i$ satisfy $\sum_{j \,:\, j \ne i} \svtil_i^T \svtil_j \le \frac{w}{2}$. We clearly only need to prove this for $k' \ge 2$. In addition, since all the $\svtil_i$'s are generated in a symmetric manner, we can assume without loss of generality that $i = k'$. 

    Define $\Sctil^{(-k')} = \{\svtil_1, \svtil_2, \ldots, \svtil_{k'-1}\}$. We claim that with probability at least $1-0.5\delta$, $\svtil_{k'}\notin \Sctil^{(-k')}$. To see this, note that $\svtil_1,\dotsc,\svtil_{k'-1}$ correspond to at most $k'-1$ distinct elements form $\Sc$, and hence
    $\PP[\svtil_{k'}\in \Sctil^{(-k')}] \le \frac{k-1}{|\Sc|} \le 0.5\delta$.
    Conditioned on $\svtil_{k'}\notin \Sctil^{(-k')}$, each $\svtil_j$ for $1\le j\le k'-1$ is a uniformly random string in $\Sc\setminus \{\svtil_{k'}\}$.  As a result, one can apply the same analysis as that for \eqref{eqn:proof8} (after replacing $k'$ by $k'-1$), and deduce that
    \begin{equation}
        \PP\bigg[\sum_{j=1}^{k'-1} (\svtil_{k'}^T \svtil_j) \ge \frac{w}{2}\bigg] \le \frac{\delta}{4|\Sc|} = o(\delta),
    \end{equation}
    where we used the fact that $|\Sc| = \frac{2k}{\delta} \to \infty$.  Hence, we know that with probability at least $(1-0.5\delta)\cdot (1-o(\delta)) \ge 1-0.6\delta$ (for sufficiently large $k$), the multi-set $\Sctil^{(-k')}$ and $\svtil_{k'}$ satisfy $\sum_{1 \le j \le k'-1} \svtil_{k'}^T \svtil_j \le \frac{w}{2}$.
    
    Finally, a union bound over the two requirements shows that the requirements hold simultaneously with probability at least $1-\delta$, meaning that $\Sc$ is an LCS.
\end{proof}

\section{Extension to the Noisy Setting} \label{sec:noisy}

While the noiseless group testing model is in itself of significant interest, there is also substantial motivation to develop algorithms with low decoding time in the presence of noise.  For combinatorial group testing, it is common to assume a bounded number of {\em worst case} errors (e.g., see \cite{Che09}), whereas for probabilistic group testing, it is more common to assume that tests are subject to {\em random} noise (e.g., see \cite{Cai13,Lee15a,Ina19}).  We focus on the latter, and then briefly discuss the former.

Specifically, we outline a natural extension of BMC (i.e., Algorithms \ref{alg:masking} and \ref{alg:data}) and Theorem \ref{the:main} to the noisy setting.  Generalizing \eqref{eq:gt_noiseless_model}, we consider the following widely-adopted symmetric noise model:
\begin{equation}
    Y = \bigg( \bigvee_{j \in \Kc} X_j \bigg) \oplus Z, \label{eq:gt_symm_model}
\end{equation}
where $Z \sim \Bernoulli(\xi)$ for some constant $\xi \in \big[0,\frac{1}{2}\big)$,  and $\oplus$ denotes modulo-2 addition.  We assume that the noise is independent between tests, i.e., we have i.i.d.~bit flips.

In Sections \ref{sec:bmc} and \ref{sec:lcs}, we used masking strings with length $t_1 = 4kw$, and showed that this leads to at most $\frac{w}{2}$ collisions in each defective item's masking string, with high probability.  In the following, we make use of the following more general statement: For masking strings of length $t_1 = c_1kw$ constructed by concatenating $w$ unit-weight substrings of length $c_1 k$ for some constant $c_1 \ge 4$, we have 
\begin{equation}
    t_1 = c_1 k w \implies \text{At most }\frac{2w}{c_1}\text{ collisions} \label{eq:c1}
\end{equation}
in each defective item's masking string, with high probability.  This follows from straightforward modifications of our previous analysis, including its associated constant factors. 

{For the first batch of tests, we can modify the decision step (Line \ref{line:decision} of the second part of Algorithm \ref{alg:masking})  to the following for improved robustness:
\begin{equation}
    \text{if $\sv^T\yv_1 \ge \frac{3w}{4}$ then include $\sv$ in the output list $\Lc$.} \label{eq:thresh}
\end{equation}}
As seen in the proof of Lemma \ref{lem:batch1} the values of $\sv^T\yv_1$ that we obtain in the absence of noise are exactly $w$ for masking strings of defective items, and at most $\frac{w}{2}$ for the other masking strings.  Hence, as long as fewer than $\frac{w}{4}$ bit flips occur in the entries of $\yv_1$ corresponding to ones in $\svtil$, the correct decision is still made.  

Under the above model of i.i.d.~bit flips, we can simply use the Chernoff bound for an i.i.d.~sum of $w$ random variables (see Appendix \ref{sec:conc}), and deduce that if $\xi < \frac{1}{4}$, then the mis-classification event resulting from Algorithm \ref{alg:masking} has probability $O(n^{-c})$, where $c$ can be set to an arbitrary value by choosing the implied constant in $w = \Theta(\log n)$ large enough.  Choosing $c$ large enough, the error probability remains small even after a union bound over the $\frac{2k}{\delta}$ masking strings.  In the case that $\xi \in \big[\frac{1}{4},\frac{1}{2}\big)$, we can increase the value of $c_1 \ge 4$ and use \eqref{eq:c1}, so that $\svtil^T \yv_1$ reduces from $\frac{w}{2}$ to $\frac{2w}{c_1}$ (or less) for masking strings not assigned to defective items.  Upon changing the threshold from $\frac{3w}{4}$ to $\frac{1}{2}\big( \frac{2w}{c_1} + w \big)$ in \eqref{eq:thresh}, the preceding argument generalizes easily to this case, permitting any noise level $\xi \in \big [0,\frac{1}{2}\big)$ as long as $c_1$ is large enough.

For the second batch of tests, when noise is present, we can no longer assume that the symbols at any non-collided locations are received perfectly.  However, since this part is based on erasure-correcting coding, we can easily generalize to {\em erasure and error correcting coding} to achieve tolerance to noise.  

In the presence of noise, the use of non-binary codes with symbols mapped directly to $\ell > 1$ bits (see Algorithm \ref{alg:data}) may not be ideal, since even a single flip among these $\ell$ bits will cause the symbol to be changed.  We therefore favor the use of a binary code $\Cc$ in the noisy setting, along with a suitable modification of the constants.  
In this case, we again use the more general statement in \eqref{eq:c1} with $c_1 \ge 4$, ensuring at most $\frac{2w}{c_1}$ erasures with high probability.  While a code with minimum distance exceeding $\frac{2w}{c_1}$ would suffice for correcting these erasures alone, here we further increase the target minimum distance beyond $\frac{2w}{c_1}$ in order to account for the bit flips.  

To give a specific example of a binary code with good distance properties, we note that \cite{Gur05} provides a code with linear encoding/decoding time achieving the Blokh-Zyablov bound  \cite[Fig.~1]{Dum98}, with example rate/distance pairs $(R,d_{\min})$ satisfying (i) $R > \frac{1}{5}$ and $d_{\min} > \frac{w}{10}$; (ii) $R > 0.04$ and $d_{\min} > \frac{w}{4}$.  In particular, the rate of the code remains positive as long as $\frac{d_{\min}}{w}$ is a constant strictly less than $\frac{1}{2}$.

To simplify the discussion, suppose that we naively replace all erasures by arbitrary bit values ($0$ or $1$), so that we only have bit flips; this allows us to use the fact that the codes from \cite{Gur05} that permit efficiently decoding any number of worst-case bit flips less than half the minimum distance.  Since the bit flips are i.i.d., we can characterize the number of flips using a concentration argument: With a low enough code rate to make the code length long enough (i.e., a large enough implied constant in $w = O(\log n)$), the number of bit flips is at most $(\xi + \eta) w$ with probability $O(n^{-c})$ for any target $c > 0$, where $\eta > 0$ is any (small) constant.  With at most $(\xi + \eta) w$ bit flips coming from the noise, and at most $\frac{2w}{c_1}$ bit flips coming from the collisions in the first batch, we find that the errors can be corrected as long as $\big( \xi + \eta + \frac{2}{c_1} \big) w < \frac{d_{\min}}{2}$.  Since $d_{\min}$ can be arbitrarily close to $\frac{w}{2}$, this condition can always be satisfied for sufficiently large $c_1$ and a sufficiently low code rate as long as $\xi < \frac{1}{4}$.  
In addition, the case $\xi \in \big[ \frac{1}{4}, \frac{1}{2}\big)$ can be handled similarly as long as one has access to an efficiently decodable constant-rate code that can simultaneously correct $\frac{2w}{c_1}$ worst-case erasures and probability-$\xi$ i.i.d.~bit flips; the condition $\xi < \frac{1}{4}$ above only arose due to using a worst-case error correcting code to correct i.i.d.~bit flips.

In summary, under i.i.d.~noise of the form \eqref{eq:gt_symm_model}, by modifying only the constant factors and the code $\Cc$ used, we can achieve the same scaling laws as Theorem \ref{the:main} in terms of both tests and runtime (at least when $\xi < \frac{1}{4}$).  To avoid repetition with the noiseless case, we omit a formal statement and derivation of this fact.  Finally, we briefly mention that BMC only has limited robustness to {\em adversarial} bit flips, since $O(w) = O(\log n)$ worst-case flips suffice to cause incorrect decisions from either the first or second batch of tests.

\section{Conclusion} \label{sec:conclusion}

We have introduced a novel scheme for sublinear-time non-adaptive group testing, and established that it attains asymptotically vanishing error probability with $t = O(k \log n)$ tests and $O(k^2 \cdot \log k \cdot \log n)$ runtime.
Our algorithm and analysis use coding-based subroutines that permit straightforward extensions to the noisy setting.

An important remaining open problem is whether the runtime can further be reduced to $k \cdot \poly(\log n)$, or better yet, to $O(k \log n)$, while still attaining $t = O(k \log n)$.  In addition, since we did not attempt to optimize constant factors, it is also of interest to sharpen the analysis (and/or modify the algorithm itself) to attain constant factors competitive with those of slower decoding techniques \cite[Ch.~2]{Ald19}.  

\appendices

\section{Concentration Inequalities} \label{sec:conc}

Throughout the paper, we make use of several standard concentration bounds for sums of independent random variables, e.g., see \cite[Sec.~4.1]{Mot10} and \cite[Ch.~2]{Bou13}.  For clarity, in this section we summarize the specific bounds used.  Letting $Z_1,\dotsc,Z_n$ be a sequence of independent and identically distributed random variables, we have the following:
\begin{itemize}
    \item (Chernoff bound) Suppose that $Z_i \in [0,1]$ almost surely, and $\EE[Z_i] = \mu$. Then, for any $\alpha > 0$, we have
        \begin{equation}
            \PP\bigg[ \sum_{i=1}^n Z_i  \ge (1+\alpha)n \mu \bigg] \le \exp\Big( -\mu n \big( (1+\alpha)\log(1+\alpha) - \alpha \big) \Big), \label{eq:strong_chernoff_1}
        \end{equation}
        and for any $\alpha \in (0,1]$, we have
        \begin{equation}
            \PP\bigg[ \sum_{i=1}^n Z_i \le (1-\alpha)n \mu \bigg] \le \exp\Big( -\mu n \big( (1-\alpha)\log(1-\alpha) + \alpha \big) \Big).
        \end{equation}
    \item (Weakened Chernoff bound) Suppose that $Z_i \in [0,1]$ almost surely, and $\EE[Z_i] = \mu$. Then, {for any $\alpha \in (0,1]$, we have
        \begin{gather}
            \PP\bigg[ \sum_{i=1}^n Z_i  \ge (1+\alpha)n \mu \bigg] \le \exp\Big( -\frac{1}{3} \alpha^2 \mu n \Big), \label{eq:weak_chernoff_1} \\
            \PP\bigg[ \sum_{i=1}^n Z_i \le (1-\alpha)n \mu \bigg] \le \exp\Big( -\frac{1}{3} \alpha^2 \mu n \Big).
        \end{gather}}
    \item (Bernstein's inequality) Suppose that $|Z_i| \le M$ almost surely, and that $\EE[Z_i] = 0$ and $\EE[Z_i^2] \le V$.  Then, for any $\delta > 0$, we have
    \begin{equation}
        \PP\bigg[ \sum_{i=1}^n Z_i  \ge t \bigg] \le \exp\bigg( - \frac{t^2}{ 2\big( n V + \frac{1}{3} M t \big) } \bigg).
    \end{equation}
\end{itemize}

\section{The Very Sparse Regime $k = O(1)$} \label{sec:fixed_k}

In our main result (Theorem \ref{the:main}), we assumed that $k \to \infty$ as $n \to \infty$.  Here we describe how to use BMC to attain $\pe \to 0$ as $n \to \infty$ in the case that $k = O(1)$, while using $t = O(\log n)$ tests and $O((\log n)^2)$ decoding time.

We again use Definition \ref{def:masking}, letting each masking string contain $w = \log n$ segments of length $4k$ and weight one, so that the total length is $t_1 = 4 k \log n$.  Similarly to Section \ref{sec:masking}, we consider the random construction of a multi-set $\Sc$ of such masking strings, with each non-zero entry of each length-$4k$ segment being independently chosen uniformly at random.  We let the size of this multi-set be $|\Sc| = \log n$.

For two such random masking strings $\sv$ and $\sv'$, the average number of collisions (i.e., $1$'s in common) follows a binomial distribution with parameters $\log n$ and $\frac{1}{4k}$, so the mean is $\frac{\log n}{4 k}$.  Hence, by the Chernoff bound (see Appendix \ref{sec:conc}), the probability of the number of collisions exceeding $\frac{\log n}{2k}$ is $O(n^{-c})$ for some $c > 0$ (here $c$ depends on $k$, but is still $\Omega(1)$ since $k = O(1)$).  By a union bound over $O( \log^2 n )$ pairs, we deduce that the probability of {\em any} two $\sv,\sv' \in \Sc$ having more than $\frac{\log n}{2k}$ collisions tends to zero as $n \to \infty$.  We henceforth condition on the (high-probability) complement of this event.

Due to this conditioning, we find that {\em any} $\sv \in \Sc$ collides with {\em any} subset $\Sctil \subseteq \Sc \setminus \{\sv\}$ of cardinality $k$ (or less) in at most $k \times \frac{\log n}{2k} = \frac{1}{2} \log n = \frac{w}{2}$ positions.  Hence, the two conditions in Definition \ref{def:lcs} hold for {\em any} $k' \le k$ distinct strings $\svtil_1,\dotsc,\svtil_{k'}$ from $\Sc$.  As a result, when we assign strings from $\Sc$ to the $n$ items uniformly at random with replacement, the only case that causes excessive collisions is that in which two defective items are assigned the same masking string.  Since $|\Sc| = \log n$ and $k = O(1)$, this occurs with probability $O\big( \frac{1}{\log n} \big)$.

Given $\Sc$ satisfying the preceding properties, the proof of Theorem \ref{the:main} goes through essentially unchanged with $w = O(\log n)$.  The number of tests is $O(w) = O(\log n)$, and the decoding time is dominated by the $O(|\Sc| \cdot \log n) = O((\log n)^2)$ term in the first batch.

\section*{Acknowledgment}

We thank Rui Zhang for helpful discussions, Sidharth Jaggi for helpful comments regarding the sublinear-time group testing literature, and Mahdi Cheragchi for helpful suggestions regarding efficient erasure-correcting codes.
This work is partly supported by the research grant MOE2017-T2-2-031 from Singapore Ministry of Education Academic Research Fund Tier-2. Binbin Chen is supported by the National Research Foundation, Prime Minister's Office, Singapore, partly under the Energy Programme administrated by the Energy Market Authority (EP Award No. NRF2017EWT-EP003-047) and partly under the Campus for Research Excellence
and Technological Enterprise (CREATE) programme. Jonathan Scarlett is supported by an NUS Early Career Research Award.

\bibliographystyle{IEEEtran}
\bibliography{JS_References}
 
\end{document}